\newcolumntype{P}[1]{>{\centering\arraybackslash}p{#1}}
\newcolumntype{M}[1]{>{\centering\arraybackslash}m{#1}}
\newcommand{\defeq}{\ensuremath{\triangleq}}
\newtheorem{lemma}{Lemma}
\newtheorem{theorem}{Theorem}
\newtheorem{corollary}{Corollary}
\newtheorem{remark}{Remark}
\begin{document}
%
\title{Mobility-Aware Coded Storage and Delivery}
%
%
%

\author{Emre~Ozfatura and
        Deniz~G{\"u}nd{\"u}z
\thanks{Emre Ozfatura and Deniz G{\"u}nd{\"u}z are withInformation Processing and Communications Lab, Department of Electrical and Electronic Engineering,
Imperial College London Email: \{m.ozfatura, d.gunduz\} @imperial.ac.uk} 
\thanks{This work was supported in part by the Marie Sklodowska-Curie Action SCAVENGE (grant agreement no. 675891), and by the European Research Council (ERC) Starting Grant BEACON (grant agreement no. 725731).}
\thanks{This paper was presented in part at the 2018 ITG Workshop on Smart Antennas in Bochum, Germany.}}

\maketitle

\begin{abstract}
Content caching at small-cell base stations (SBSs) is a promising method to mitigate the excessive backhaul load and delay, particularly for on-demand video streaming applications. A cache-enabled heterogeneous cellular network architecture is considered in this paper, where mobile users connect to multiple SBSs during a video downloading session, and the SBSs request files, or fragments of files, from the macro-cell base station (MBS) according to the user requests they receive. A novel content storage and delivery scheme that exploits coded storage and coded delivery jointly, is introduced to reduce the load on the backhaul link from the MBS to the SBSs. It is shown that the proposed caching scheme, by exploiting user mobility, provides a significant reduction in the number of sub-files required while also reducing the backhaul load when the cache capacity is large. Overall, for practical scenarios in which the number of subfiles that can be created is limited (by the size or the protocol overhead), the proposed coded caching and delivery scheme decidedly outperforms state-of-the-art alternatives.    
\end{abstract} 
\begin{IEEEkeywords}
Coded caching, coded storage, content delivery, heterogeneous cellular networks, maximum distance separable (MDS) code, mobility, reuse patterns,  subpacketization.
\end{IEEEkeywords}

\section{Introduction}
\IEEEPARstart{D}ue to the popularity of on demand video streaming services, such as YouTube and Netflix, video dominates the Internet traffic \cite{sandvine,cisco}. A promising solution to mitigate the excessive video traffic and to reduce the video latency in video streaming, particularly in cellular networks, is storing the popular contents at the network edge. There are two prominent approaches used extensively in the literature to reduce the backhaul load in cellular networks; namely, {\em coded storage} and {\em coded delivery}. In a broad sense, coded storage is designed from the  perspective of the users, and allows them to efficiently receive a file from multiple access points without worrying about overlapping bits.  Maximum distance separable (MDS) or fountain codes,  have been studied extensively, for example, in multi-access downlink scenarios, e.g., a static user downloading content from multiple small-cell base stations (SBSs) \cite{femto,SBScache,SBScoop1,SBScoop2}, mobile users (MUs) connecting to different SBSs sequentially to download content \cite{CC.M1,CC.M2,CC.M3,CC.M4,wsa}, or MUs utilizing device-to-device (D2D) communication opportunities \cite{G.M1,G.M2,G.M3,G.M4,G.M5,G.M6}.\\ 
\indent Coded delivery, on the other hand, is designed from the point of the server, which utilizes the caches of the users to seek multicasting opportunities in order to reduce the amount of data it needs to transmit to the users to satisfy their demands \cite{CD.F1,CD.F2,CD.D2D,CD.F3,CD.F4,CD.ND1,CD.ND2,CD.ND3,CD.ND4,CD.ND5,CD.ND6,CD.ND7,CD.ND8,CD.ND9,CD.Dopt4,CD.F5}. Coded delivery schemes consist of two phases. In the {\em placement phase}, files are divided into sub-files, and each user stores a certain subset of the sub-files. In the {\em delivery phase}, the server carefully constructs the multicast messages as XORed combinations of the requested sub-files. Each user recovers its request from the multicasted messages together with its own cache contents. We note that the multicast gain increases with the number of users; that is, the higher the number of users the lower the per-user delivery rate. However, to achieve the promised gain, the number of sub-files has to increase exponentially with the number of users, which is considered as one of the main challenges in front of the implementation of coded delivery in practice \cite{challenge}.\\
\begin{figure*}
    \centering
         \begin{subfigure}[b]{0.47\textwidth}
        \includegraphics[scale=0.35]{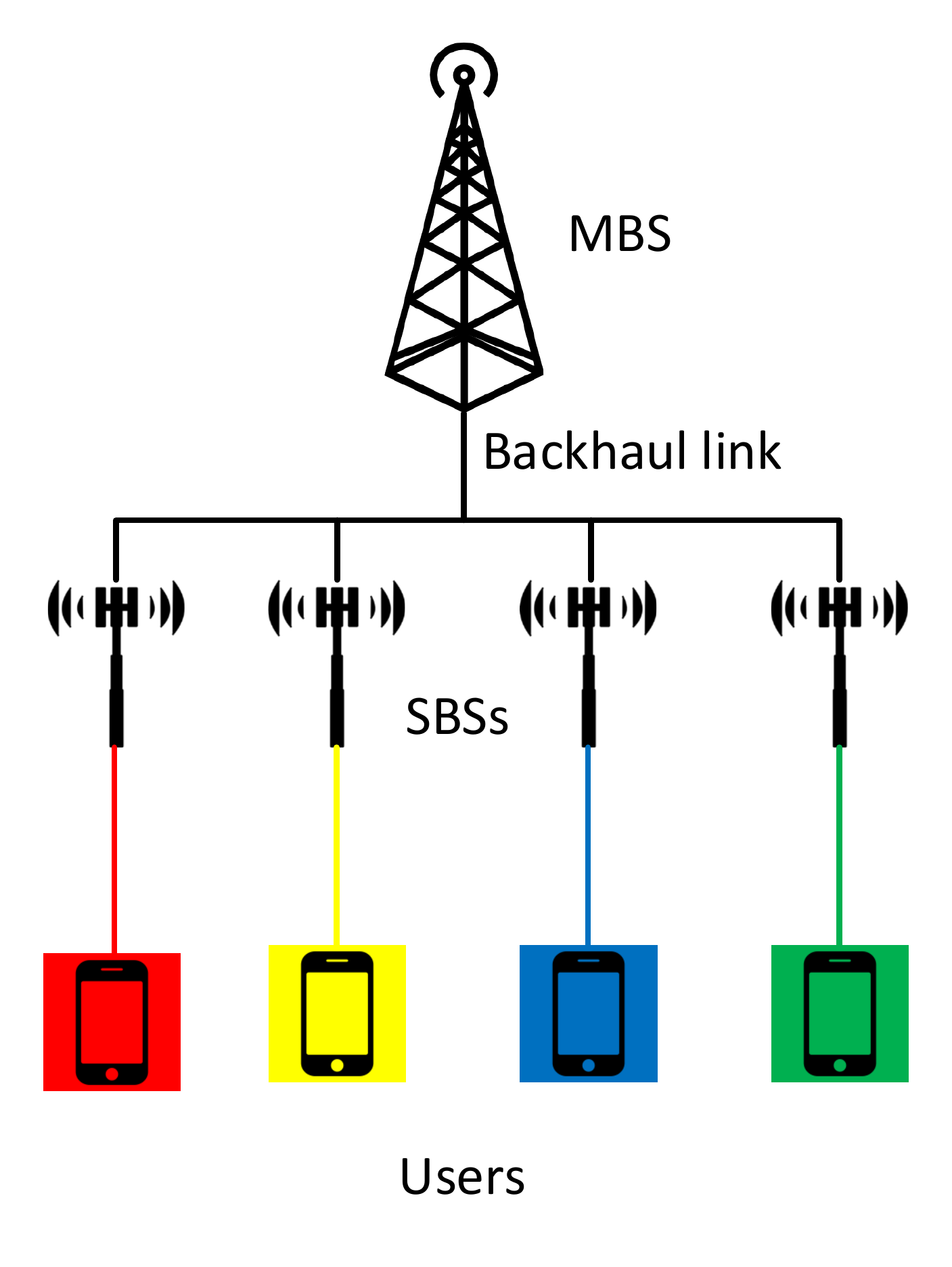}
        \caption{Single access model.}
				\label{model1}
    \end{subfigure}
    \begin{subfigure}[b]{0.47\textwidth}
        \includegraphics[scale=0.35]{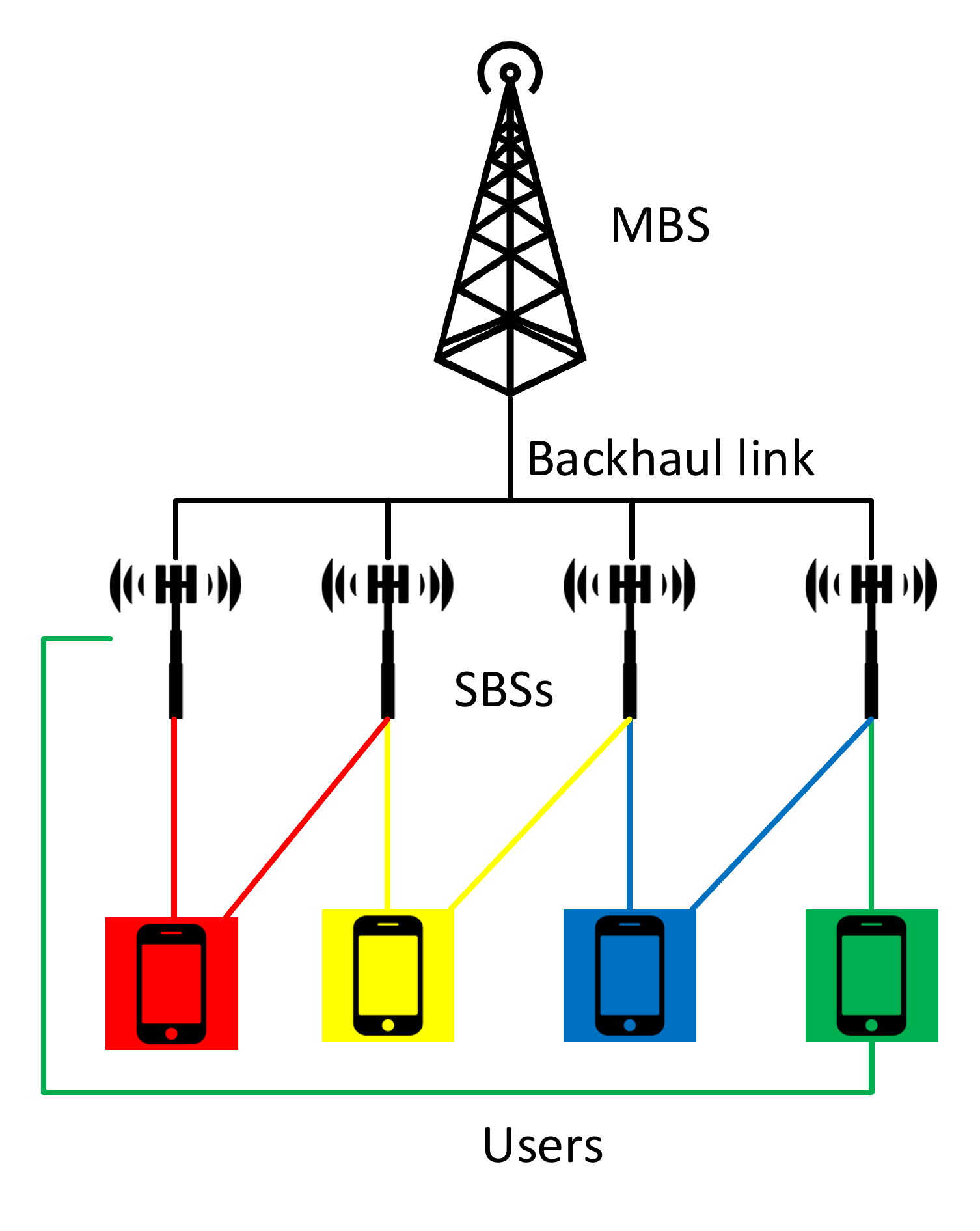}
        \caption{Multi-access model with uniform access pattern.}
				\label{model2}
        \end{subfigure}
				\caption{ Illustration of the static user access models studied in \cite{CD.F1} and \cite{CD.ND4}, respectively.}
		\label{accessmode}
\end{figure*}
\indent To remedy this limitation of coded delivery,  coded caching and delivery designs that require {\em low subpacketizaton levels} have become an active research area. In 
\cite{lowsubpacket3,lowsubpacket4}, the authors show that a particular family of bipartite graphs, namely Ruzsa-Szemer{\'e}di graphs, can be used to  construct a coded caching scheme, in which, for sufficiently large number of caches, $K$, the number of sub-files scales linearly with $K$. Unfortunately, however, the large $K$ assumption limits the practical applicability of the proposed coded caching design. Alternatively, in \cite{lowsubpacket2} a novel coded caching scheme  based on linear block codes is introduced and it is shown that the number of sub-files can be reduced dramatically with a small increase in the delivery rate for any practical value of $K$. It is further shown that, with different linear block codes different delivery rates can be achieved using different  number of sub-files in fixed $K$, allowing some flexibility for implementation. In \cite{lowsubpacket5}, coded caching designs based on linear block codes are constructed by using the so-called {\em placement delivery arrays (PDAs)}. We note that all the aforementioned works seek a coded caching design that reduces the number of sub-files with a minimum sacrifice in the delivery rate.\\
\indent In this paper, we show that the mobility pattern of the users can be utilized to reduce the number of required sub-files by  introducing a novel coded storage and delivery scheme, which is designed to take into account the random mobility patterns of the users. The proposed scheme divides the SBSs into smaller groups according to the mobility patterns of the users, and applies coded delivery to each group of SBSs independently. MDS-coded storage is used to guarantee that the MUs can collect useful information from any of the SBSs they connect to until they recover their requested files. We introduce an efficient grouping strategy via utilizing the analogous well-known frequency reuse pattern problem \cite{reusepattern}. Our contributions in this paper can be summarized as follows:

\begin{itemize}
\item We introduce a two-level coded storage and delivery strategy, and show that for a given MU mobility scenario, optimal coded caching solution for the proposed strategy can be analysed as a cell coloring problem. We further provide the optimal coloring scheme that minimizes the delivery rate.

\item We also show that the proposed mobility-aware scheme can significantly improve the subpacketization-delivery rate trade-off via utilizing the mobility pattern of the MUs, and outperforms state-of-the-art alternatives.

\item Finally, we show that the benefits of the proposed mobility-aware scheme extends also to non-uniform popularity distributions as well as to more general mobility scenarios.
\end{itemize}

\indent In \cite{CD.ND4}, a hierarchical network, in which a macro-cell base station (MBS) serves multiple cache-equipped SBSs through a shared link while each user is connected to $L$ SBSs, is analyzed and it is shown that a lower delivery rate compared to \cite{CD.F1} is achievable (please see Fig. \ref{accessmode} for illustration of the models considered in \cite{CD.F1} and \cite{CD.ND4}). Although it is not highlighted  explicitly in \cite{CD.ND4}, existence of a multi-access pattern, i.e., each user accessing $L$  SBSs, reduces the number of sub-files as well. A similar observation is made in \cite{lowsubpacket6} by leveraging multiple transmission antennas instead of user mobility. Considering a MBS with $L$ transmit antennas, the users are divided into groups of $L$, so that the  channel from the MBS to each group becomes an $L \times L$  channel with $L$ transmit antennas and $L$ single-antenna users. For the delivery phase, a two-level coding scheme is used, where the outer code is designed considering each group of users as a single virtual user, and their $L$ antennas as a single virtual antenna, and the coded caching scheme of \cite{CD.F1} is applied for these $L$ virtual users. Hence, each user in the same group caches the same sub-files. The inner code, for each group, is designed according to the $L \times L$ transmission channel from the server to the $L$ users in this group.\\
\indent The rest of the paper is organized as follows. The system model is introduced in Section \ref{sec:sys}, and the proposed coded storage and delivery  scheme is presented in  Section \ref{sec:sol}. The performance of the proposed scheme is evaluated and compared numerically with the coded delivery scheme in \cite{CD.F1} in Section \ref{sec:results}. Finally, in Section \ref{sec:conc} we conclude the paper with a summary of our main contributions and potential future research directions.\\ 
\indent \textbf{Notations.} Throughout the paper, for positive integer $N$, the set $\left\{1,\ldots,N\right\}$ is denoted by $[N]$. We use $\bigoplus$ to denote the bit-wise XOR operation, while $\binom{j}{i}$ represents the binomial coefficient corresponding to the number of $i$-element subsets of a set with $j$ elements.
 
\section{System Model}\label{sec:sys}
\subsection{Network model}
 We consider a cellular network architecture that consists of one MBS and $K$ SBSs, i.e., $SBS_{1},\ldots,SBS_{K}$. The MBS has access to a content library of $N$ files, $W_{1},\ldots,W_{N}$, each of size $F$ bits. Each SBS is equipped with a cache memory of $MF$ bits. The SBSs are connected to the MBS through a shared wireless backhaul link. Hence, when a user requests a content from a SBS, the SBS first checks its content cache. If the requested content is fully cached, then the SBS directly delivers the corresponding file. If the requested content is not cached at all, or partially cached, then the remaining parts of the content are first transferred from the MBS to the SBS over a backhaul link. We assume that all the demands from the MBS are delivered simultaneously over a shared error-free backhaul connection. \\
 \indent In this paper, we assume that all the files in the library are requested by the users with the same probability, i.e.,  $W_{n}$ is requested by a MU with probability $1/N$, $n\in[N]$. Under this assumption, if $N$ is large compared to $K$, which is the case in realistic scenarios, then it is safe to assume that all the users request a different content. For instance, when $K=30$ and $N=10000$ the probability of each user requesting a different file is $0.975$. This assumption has been widely accepted in the coded caching literature as it also represents the worst case scenario. We also assume that the number of users in the network is limited by the number of SBSs; hence, there are $K$ users in the network, i.e., $U_{1},\ldots,U_{K}$, and the requests of the users are denoted by the demand vector $\mathbf{d}\defeq \left(d_{1},\ldots,d_{K}\right)$.

 \indent The placement phase, during which the caches of the SBSs are filled, takes place before the demand vector $\mathbf{d}$ is revealed. The required delivery rate for the backhaul link, $R(M,N,K,\mathbf{d})$, is defined as the minimum number of bits that must be transmitted over the shared link, normalized by the file size, for a given normalized cache capacity $M$, in order to satisfy all the user demands, $\mathbf{d}$. Since, we assume that each user requests a different file, we simply use $R(M,N,K)$ to denote the worst case delivery rate, instead of $R(M,N,K,\mathbf{d})$.\\
\indent The delivery rate over the backhaul link from the MBS to the SBSs depends on  the access model of the users. In this paper, we are particularly interested in the {\em single access model with mobility}, in which a MU is connected to exactly one SBS at a particular time instant; however, due to mobility, it connects to multiple SBSs over time. To better motivate and explain our model and results, we will first explain the  previously studied access models in the literature, and then provide a detailed explanation of the considered single access model with mobility. We want to remark that the proposed mobility-aware scheme is not limited to {\em uniform-demand} and {\em one MU per SBS} scenario. It can also be applied under {\em non-uniform demand} probabilities; thus, it is compatible with most of the placement schemes introduced for non-uniform demand scenarios \cite{CD.ND1,CD.ND2,CD.ND3,CD.ND4,CD.ND5,CD.ND6,CD.ND7,CD.ND8}, and can be easily employed for  multiple-user per SBS scenario. However, to establish a clear connection between the previously introduced schemes in \cite{CD.ND4} and \cite{CD.F1}, as well as to provide a fair comparison with the schemes proposed in \cite{CD.F1} and \cite{lowsubpacket2} we will introduce our results under this simplifying assumptions. Nevertheless, we later extend our analysis to both non-uniform popularity and multiple-MU per SBS cases later in Section \ref{sec:results}.

\subsection{User access models}
\subsubsection{Static single access model}
In this model, it is assumed that each user is connected to exactly one SBS, as illustrated in Fig. \ref{model1}. This corresponds to the shared link problem introduced in \cite{CD.F1}. The caching and coded delivery method introduced in \cite{CD.F1} works as follows. For $t \triangleq \frac{MK}{N} \in \mathbb{Z}$, in the placement phase, all the files are cached at level $t$; that is, $W_{n}$, $n\in \left[N\right]$, is divided into ${K \choose t}$ non-overlapping sub-files of equal size, and each sub-file is cached by a distinct subset of $t$ SBSs. Then, each sub-file can be identified by a subset $\mathcal{I}$, where $\mathcal{I}\subseteq \left[K\right]$ and $\vert \mathcal{I}\vert =t$, such that sub-file $W_{n,\mathcal{I}}$ is cached by $SBS_{k}$, $k\in\mathcal{I}$. In the delivery phase, for each subset $\mathcal{S}\subseteq \left[K\right]$, $\vert \mathcal{S} \vert=t+1$, all the requests of the SBSs in $\mathcal{S}$ can be served simultaneously by the MBS via multicasting 
\begin{equation}\label{coddel}
\bigoplus_{s\in \mathcal{S}}W_{d_{s},\mathcal{S}\setminus\left\{s\right\}}.
\end{equation}
Thus, with a single multicast message the MBS can deliver $t+1$ sub-files, and achieve a {\em multicasting gain} of $t+1$. Accordingly, the achievable delivery rate is $R(M,N,K)=\frac{K-t}{t+1}$. We emphasize that the promised coded caching gain is obtained by dividing each file into ${K \choose t}$ subfiles, which grows exponentially with $K$. This limits the potential gain in practice for finite-size files.\\
\indent The delivery rate for a cache memory $M$ that results in a non-integer $t$ value can be obtained as a linear combination of the delivery rates of the two nearest $M$ values for which the corresponding $t$ values are integers. This is achieved by {\em memory-sharing} between the caching and delivery schemes for those two $M$ values. In the rest of the paper we will consider integer $t$ values unless otherwise stated.\\
\subsubsection{Static multi-access model}
In this model, each user connects to  multiple SBSs. A particular case of this problem is studied in \cite{CD.ND4}, where each user connects to  $L$ SBSs following a certain cyclic pattern, where user $U_{k}$ connects to $SBS_{k},\ldots,SBS_{k+L-1\mod K}$, $k\in[K]$. The case of $L=2$ is illustrated in Figure \ref{model2}. In \cite{CD.ND4}, the authors divide the SBSs into $L$ groups, where the $l$th group consists of $\mathcal{G}_{l}\defeq\left\{SBS_{k}: k \mod L = l\right\}$. Then, the coded delivery scheme in \cite{CD.F1} is adapted to this setting as follows. In the placement phase each file is divided into $L$ equal-size disjoint fragments, i.e., $W^{l}_{n}$ is the $l$th fragment of file $W_{n}$. Then, for each $l\in\left[L\right]$, all the fragments in $\mathcal{W}^{l}\defeq\left\{W^{l}_{1},\ldots,W^{l}_{N}\right\}$ are cached by the 
SBSs in $\mathcal{G}_{l}$. For the placement of a particular group $\mathcal{G}_{l}$, we use the same caching scheme as in the static single access model with $\hat{K}=K/L$ SBSs, each with a normalized cache size\footnote{The cache capacity is normalized here with respect to the size of a fragment, which is $1/L$ of the original file.} of $\hat{M}=ML$. Therefore, each fragment of each file is cached at level $t\defeq\frac{\hat{K}\hat{M}}{N}=MK/N$, i.e.,  sub-file $W^{l}_{n,\mathcal{I}}$, where $\mathcal{I}\subseteq \left\{k: k \mod L =l\right\}$ and $\vert\mathcal{I} \vert=t$, is cached by $SBS_{k}$, $k\in\mathcal{I}$. Similarly, the coded delivery phase is executed for each $\mathcal{G}_{l}$, $l\in\left[L\right]$, separately. The coded delivery algorithm for this model is given in Algorithm 1, and the corresponding  delivery rate is found as $R(M,N,K)=\frac{K-Lt}{t+1}$. We note that the delivery rate decreases with $L$, the number of SBSs each user connects to. The number of sub-files each file is divided into is $L{K/L \choose t}$ for this scheme, which provides a significant reduction in the subpacketization.
\begin{algorithm}\footnotesize{
		\For{$l=1:L$}{
       \For{$\acute{l}=0:L-1$}{
               \For{$\mathcal{S}\in\left\{k:k \mod L = l\right\}, \vert\mathcal{S}\vert=t$}{
                                $\bigoplus_{s\in \mathcal{S}}W^{l}_{d_{(s-\acute{l})\mod K},\mathcal{S}\setminus\left\{s\right\}}$
				                  }
				            }
				      }
											\caption{Delivery phase for the static multi-access model}
                                }            
\end{algorithm}

\subsubsection{Single access model with mobility}
In this model, MUs connect to different SBSs during the downloading period of a single-file, i.e., a video file or a group of frames, depending on the time scales. But, unlike in the previous model, each MU is connected only to the nearest SBS at any time instant. We consider equal-length time slots, whose duration corresponds to the minimum time duration a MU remains connected to the same SBS. We assume that each SBS is capable of transmitting $B$ bits to a MU within one time slot. Hence, a file of size $F$ bits can be downloaded in $T=\frac{F}{B}$ slots.  We define the mobility path of a user as the sequence of small-cells visited during these $T$ time slots. For instance, for $K=7$ and $T=3$, $SBS_{2},SBS_{3},SBS_{4}$ is one such mobility path. We further assume that during the video downloading session of $T$ time slots each MU is connected to exactly $T$ different SBSs, which we call as the {\em high mobility assumption}.
\subsection{Problem definition}
Our aim is to minimize the normalized delivery rate over the backhaul link under the single access model with high mobility assumption for MUs. In addition to reducing the normalized backhaul delivery rate, we also want to reduce the number of sub-files used in the delivery phase in order to obtain a practically viable caching strategy.\\
\indent We first note that  the single access model with mobility can be  treated similarly to the static single access model in the following way: each file is divided into $T$ disjoint fragments, and each fragment is considered as a separate file so that the size of the file library and the size of the caches are scaled to $NT$ and $MT$, respectively. Then the placement phase is executed as in \cite{CD.F1} according to caching level $t=\frac{KMT}{NT}=KM/N$, and the delivery at each time slot can also be executed as in \cite{CD.F1}, and $R(M,N,K)=\frac{K-t}{t+1}$ is still achievable with $T {K\choose t}$ sub-files. Below we will present an alternative caching and delivery scheme that will reduce the number of required sub-files considerably.\\
 \indent The approach introduced in \cite{CD.ND4} is an efficient method to reduce both the number of sub-files and the normalized delivery rate of the backhaul link, when the users connect to the SBSs in a uniform manner, as described in the previous section. However, this method is not applicable when the users do not follow uniform access patterns. Instead, MDS-coded caching can be employed when the users are mobile, or access the SBSs with non-uniform patterns \cite{nitish,CC.M1,CC.M2}. The key advantage of MDS-coded caching at the SBSs is to reduce the amount of data that need to be cached at each SBS for each file.\\
 \indent Consider the following simple example with $K=4$ SBSs, where a MU can connect to any 3 of them. In this case, each file is divided into 3 fragments, and they are encoded into 4 fragments through a (4,3) MDS code. Each SBS caches a different fragment so that a MU that connects to any three SBSs can recover the file. In this example each SBS needs to cache only one fragment for each file, equivalently, $1/3$ of the original file. Accordingly, under the high mobility assumption with given $T=F/B$, it is sufficient to store only $1/T$ portion of each file at each SBS. Hence, if $M\geq N/T$, via MDS coded storage, the normalized delivery rate over the backhaul link can be reduced to zero, which means that all the user requests can be delivered locally. Otherwise, only $MT/N$ portion of each file can be cached and delivered locally using MDS coded storage at the SBSs. The main drawback of  MDS coded storage is that, coded delivery techniques cannot be applied directly to MDS coded files since the multicasting gain of coded delivery stems from the overlaps among the cached sub-files at  different SBSs. Hybrid designs which leverage coded delivery and coded caching techniques have been previously studied for different network setups \cite{nitish,CD.CP1,CD.CP2}.

\begin{figure}
    \centering
        \includegraphics[scale=0.3]{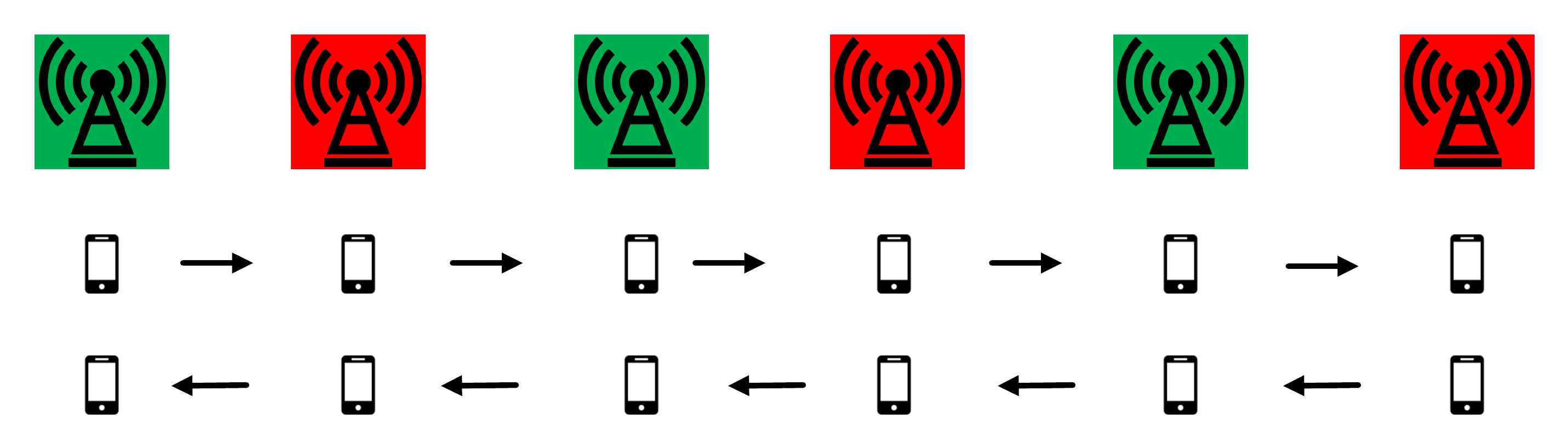}
				\caption{Linear topology: moving along a line. }
				\label{detpath}
\end{figure}   
\section{Solution Approach}\label{sec:sol}
In this section, we introduce a new hybrid design, which utilizes both  MDS-coded caching and coded delivery to satisfy the demands of MUs under the single access model with mobility, and analyze its performance under the high mobility assumption. For the sake of exposition, we first consider a special class of mobility patterns, for which the coded delivery technique in \cite{CD.ND4} can be applied directly.      

\subsection{Special case: Linear topology}
Consider a particular mobility scenario, in which a user's mobility path is determined by its direction and the first SBS it connects to. This can model, for example, MUs on a train connecting to SBSs located by the rail tracks in a known order. In this special case, MUs can be considered as  moving on a line as illustrated in Fig. \ref{detpath}.\\
\indent Although a MU is connected only to the nearest SBS at any time instant, the coded delivery technique introduced in \cite{CD.ND4} for the static multi-access model can be applied in this special case. For given file size $F$ and SBS transmission rate $B$, each file is divided into $T=F/B$ equal-size disjoint fragments, i.e.,  $W_{n}\defeq(W_{n,1},\ldots, W_{n,T})$, $n\in[N]$. Similarly, the set of all SBSs are also divided into $T$ disjoint groups, denoted by $\mathcal{G}_{1},\ldots, \mathcal{G}_{T}$, where SBSs in each group cache only one fragment of each file, using the placement scheme in \cite{CD.F1}. That is, fragment $W_{n,l}$ are cached across SBSs in group $\mathcal{G}_{l}$.\\ 
\indent For a MU to be able to recover all the sub-files of its request, the SBSs should be grouped such that, any mobility path visits exactly one SBS from each group. Grouping of the SBSs can be considered as a coloring problem, where the SBSs are colored using $T$ different colors such that any adjacent $T$ of them have different colors. An example for $T=2$ is illustrated in Fig. \ref{detpath}, where any two neighboring SBSs have different colors. Delivery phase is executed at each time slot separately for each group of SBSs $\left\{\mathcal{G}_{l}\right\}$, $l\in [T]$. Hence, for the special case of linear topology, the achievable delivery rate for the backhaul link is $R(M,N,K)=\frac{K-tT}{1+t}$ with $T\times{K/T\choose t}$ sub-files, where $t=KM/N$, and integer valued as before.
\begin{figure}
    \centering
        \includegraphics[scale=0.8]{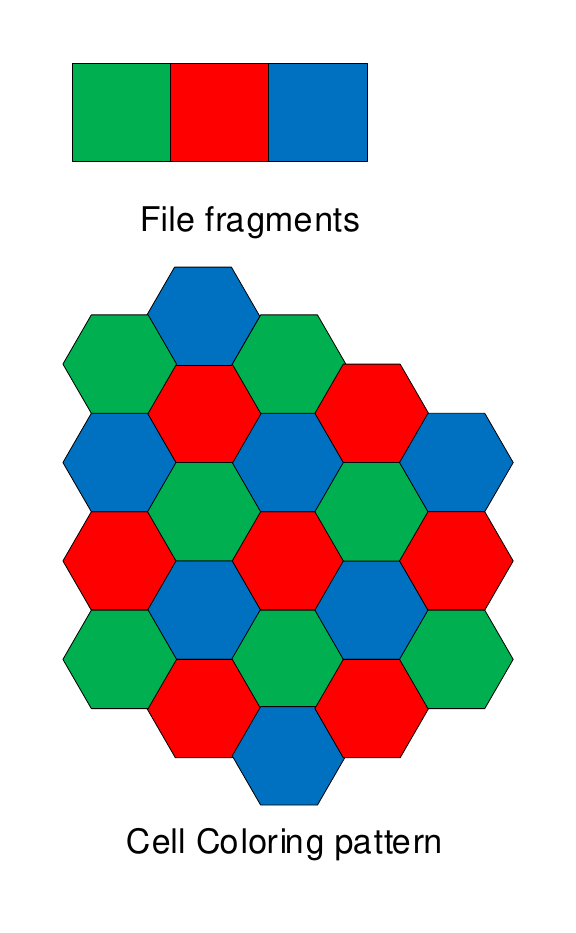}
				\caption{File fragmentation and associated cell coloring.}
				\label{cell}
\end{figure}
\subsection{General Case: Two dimensional topologies}
In this subsection, we consider a general path model in which the users move on a 2D grid, and each SBS covers a disjoint, equal size area with hexagonal shape as illustrated in Fig. \ref{cell}. As opposed to the one dimensional path model, in the 2 dimensional path model it may not be possible to group all the SBSs using only $T$ colors while ensuring that in any path of length $T$ a MU connects to exactly one SBS from each group.\\
\indent For given path length $T$, we will say that the SBSs are {\em $L$-colorable}, if there is a coloring of the SBSs with $L$ colors such that any mobility path of length $T$ consists of $T$ SBSs with different colors. Note that we must have $L\geq T$. The following theorem states the achievable delivery rate over the backhaul link for an $L$-colorable network.
\begin{theorem}
For given values of the variables $N,M,K,T$, and, $t\defeq\frac{KMT}{NL}$ if the network is $L$-colorable, then the following delivery rate over the backhaul link is achievable:
\begin{equation}
R(M,N,K)=\frac{K-tL}{1+t}
\end{equation}
using $T\times{K/L\choose t}$ sub-files, for integer $t$ values.
\end{theorem}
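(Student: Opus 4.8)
The plan is to build a two-level code that is the $L$-colored analogue of the linear-topology construction: an \emph{outer} MDS code that spreads each file across the $L$ colors, and, \emph{inside} each color class, the coded placement and delivery of \cite{CD.F1}. The $L$-colorability hypothesis is precisely what makes the outer MDS code decodable, since it guarantees that the $T$ SBSs on any mobility path carry $T$ \emph{distinct} colors.

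First I would set up the placement. Fix an $L$-coloring $c:[K]\to[L]$ with the stated path property, write $\mathcal{C}_j$ for the $j$-th color class, and work under the natural (and, I believe, implicit) assumption that the coloring is balanced, $|\mathcal{C}_j|=K/L$. Encode every file $W_n$ with an $(L,T)$ MDS code over symbols of $F/T$ bits, obtaining coded chunks $\widetilde W_{n,1},\dots,\widetilde W_{n,L}$ of size $F/T$, any $T$ of which reconstruct $W_n$. Declare $\mathcal{C}_j$ responsible for the $j$-th chunk of every file and, inside $\mathcal{C}_j$ (which has $K/L$ SBSs), run the \cite{CD.F1} placement at level $t=\tfrac{KMT}{NL}$: cut $\widetilde W_{n,j}$ into $\binom{K/L}{t}$ sub-pieces indexed by the $t$-subsets of $\mathcal{C}_j$, and let each SBS store the sub-pieces whose index contains it. A short calculation then confirms the per-SBS load is $N\cdot\tfrac{t}{K/L}\cdot\tfrac{F}{T}=MF$, so the memory constraint holds with equality.

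Next I would describe delivery and verify correctness. The delivery is run independently for each of the $T$ slots and each of the $L$ color classes: in slot $\tau$, color class $\mathcal{C}_j$ is treated as a \cite{CD.F1} instance with $K/L$ caches at level $t$ whose demands are the (assumed distinct) files requested by the users currently matched to those $K/L$ SBSs, and the MBS multicasts the corresponding $\binom{K/L}{t+1}$ XORed sub-pieces of the $j$-th chunks. The crucial correctness step is that a user traveling along $(s_1,\dots,s_T)$ sees colors $c(s_1),\dots,c(s_T)$ that are pairwise distinct by $L$-colorability; in slot $\tau$ it is attached to $s_\tau$ and, by correctness of the inner scheme in class $\mathcal{C}_{c(s_\tau)}$, it recovers the whole chunk $\widetilde W_{d_k,\,c(s_\tau)}$ from its cache plus the multicast, so after $T$ slots it holds $T$ distinct coded chunks of $W_{d_k}$ and decodes it by the MDS property.

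Finally the accounting: a fixed class in a fixed slot emits $\binom{K/L}{t+1}$ messages of $\tfrac{F/T}{\binom{K/L}{t}}$ bits, i.e.\ a normalized rate $\tfrac{1}{T}\cdot\tfrac{K/L-t}{t+1}$; summing over the $L$ classes and the $T$ slots gives $R(M,N,K)=\tfrac{K-tL}{t+1}$, while every sub-piece has size $\tfrac{F}{T\binom{K/L}{t}}$, giving the stated $T\binom{K/L}{t}$ subpacketization. Non-integer $t$ is handled by the usual memory-sharing, and $t\ge K/L$ (equivalently $M\ge N/T$) trivially yields $R=0$. I do not expect a deep obstacle — the construction collapses to the linear-topology scheme when $L=T$ — so the real work is in (i) arranging the outer MDS code and the inner per-class placement so that the distinctness of colors along a path cleanly translates into $T$ distinct recovered chunks, and (ii) the bookkeeping that one static placement simultaneously supports the different user--SBS matchings arising across the $T$ slots, together with the edge cases on $t$. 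The one modeling point worth stating explicitly is that at each slot the attachment is a perfect matching (one MU per SBS); if some SBS is idle in a slot, that slot's contribution to the rate only decreases.
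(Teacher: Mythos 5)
Your construction is exactly the one the paper uses: an outer $(L,T)$ MDS code assigning one coded fragment per color class, the placement and delivery of \cite{CD.F1} run inside each class of $K/L$ SBSs at level $t=\frac{KMT}{NL}$, correctness via the distinct-colors-along-a-path property, and the identical rate and subpacketization accounting. Your write-up is in fact somewhat more explicit than the paper's (the memory-constraint check and the per-sub-piece size bookkeeping), but there is no substantive difference in approach.
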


\begin{proof}
In the placement phase, each file is divided into $T$ disjoint fragments with equal size. These are then encoded into $L$ fragments using a $(L,T)$ MDS code. Hence, any $T$ fragments out of the total $L$ is sufficient to decode the original file. Consequently, each group of SBSs (SBSs with the same color) cache a different fragment using the placement scheme in \cite{CD.F1}. The overall delivery phase consists of $T$ identical consecutive delivery steps, each executed in one time slot, such that in each step a coded fragment is delivered to each MU, and having received $T$ fragments at the end of $T$ steps, each MU can recover the original file. To this end, we focus on a single delivery step. The number of SBSs in each group, labeled with the same color, is $\hat{K}=K/L$. If we consider the coded delivery phase for a particular group at a particular time slot, this is identical to the single access model with $\hat{K}$  SBSs each with
a cache memory  of size $\hat{M}=MT$ files; and hence, the corresponding delivery rate is  $\frac{\hat{K}-\hat{K}\hat{M}/N}{1+\hat{K}\hat{M}/N} \frac{1}{T}=\frac{K/L-t}{t+1}\frac{1}{T}$, where $t\defeq\frac{KMT}{NL}$. Accordingly, the overall delivery rate for the backhaul link is found as $R(M,N,K)=\frac{K-tL}{1+t}$.
\end{proof}
 For non-integer $t$ values the following lemma can be used to calculate the corresponding achievable delivery rate. 
\begin{lemma}
If $t=\frac{KMT}{NL}$ is not an integer, then the following rate is achievable by  {\em memory sharing}
\begin{equation}
R(M,N,K)=\left(\gamma\frac{\frac{K}{L}-\left\lfloor t\right\rfloor}{\left\lfloor t\right\rfloor+1}+(1-\gamma)\frac{\frac{K}{L}-\left\lceil t\right\rceil}{\left\lceil t\right\rceil+1}\right)L,
\end{equation}
where $\gamma\defeq\left\lceil t\right\rceil-t$.
\end{lemma}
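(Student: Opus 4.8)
The plan is to reduce the statement to the standard \emph{memory-sharing} argument of \cite{CD.F1}, applied independently inside each color class. Recall from the proof of Theorem~1 that, once each file is split into $T$ fragments, encoded with an $(L,T)$ MDS code, and the $L$ resulting fragments are distributed across the $L$ color classes, the delivery problem decouples into $L$ independent instances of the static single-access problem: each color class contains $\hat{K}\defeq K/L$ SBSs, each holding (after normalizing to the fragment size $F/T$) a cache of $\hat{M}\defeq MT$ files, so the caching parameter of each instance is $\hat{t}=\hat{K}\hat{M}/N=KMT/(NL)=t$. Hence it suffices to show that a single-access instance with $\hat{K}$ caches and non-integer level $t$ achieves backhaul load $\gamma\,\frac{\hat{K}-\lfloor t\rfloor}{\lfloor t\rfloor+1}+(1-\gamma)\,\frac{\hat{K}-\lceil t\rceil}{\lceil t\rceil+1}$ (in units of the fragment size), after which the claimed rate follows by the \emph{same} summation over the $L$ classes and $T$ delivery slots used in the proof of Theorem~1, in which the factor $T$ and the per-slot factor $1/T$ cancel.

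First I would observe that the regime of interest is $M<N/T$: otherwise $\hat{M}\ge N$, every MU recovers its file entirely from cached fragments, and $R=0$, a case already discussed. Therefore $0\le t<\hat{K}$, and since $t\notin\mathbb{Z}$ we have $\lceil t\rceil=\lfloor t\rfloor+1$ with $0\le\lfloor t\rfloor<\lceil t\rceil\le\hat{K}$; in particular both neighbouring integer levels are admissible operating points of the Theorem~1 construction for that class, and $\gamma=\lceil t\rceil-t\in(0,1)$. Next I would apply memory sharing within each class: split the MDS fragment stored by the class into two parts of relative sizes $\gamma$ and $1-\gamma$, run the integer-$t$ scheme of Theorem~1 with level $\lfloor t\rfloor$ on the first part, and with level $\lceil t\rceil$ on the second. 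Since the two sub-schemes act on disjoint bit-ranges of the fragment they do not interfere, and the per-SBS cache occupancy, as a fraction of the class's fragment library, is $\gamma\frac{\lfloor t\rfloor}{\hat{K}}+(1-\gamma)\frac{\lceil t\rceil}{\hat{K}}=\frac{\gamma\lfloor t\rfloor+(1-\gamma)\lceil t\rceil}{\hat{K}}=\frac{t}{\hat{K}}=\frac{\hat{M}}{N}$, i.e. exactly $MF$ bits, so the cache constraint is met with equality.

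The per-class, per-slot backhaul load is then the sum of the loads of the two sub-schemes, namely $\bigl(\gamma\frac{\hat{K}-\lfloor t\rfloor}{\lfloor t\rfloor+1}+(1-\gamma)\frac{\hat{K}-\lceil t\rceil}{\lceil t\rceil+1}\bigr)\frac{1}{T}$ in units of $F$, and summing over the $L$ classes and the $T$ slots exactly as in Theorem~1 gives $R(M,N,K)=L\bigl(\gamma\frac{K/L-\lfloor t\rfloor}{\lfloor t\rfloor+1}+(1-\gamma)\frac{K/L-\lceil t\rceil}{\lceil t\rceil+1}\bigr)$, as claimed. The remaining checks are one-liners: $\gamma\lfloor t\rfloor+(1-\gamma)\lceil t\rceil=t$ using $\lceil t\rceil=\lfloor t\rfloor+1$, and $0<\gamma<1$. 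I do not expect a genuine obstacle here; the only point that needs a word of care is the \textbf{bookkeeping of the normalizations} — the scaled cache $\hat{M}=MT$, the library of $N$ fragments each of size $F/T$, and the per-slot factor $1/T$ — so that the convex combination is taken with respect to the fragment size before being rescaled by the factor $L$ back to the full-file normalization. This is precisely the accounting already carried out in the proof of Theorem~1.
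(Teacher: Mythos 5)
Your proposal is correct and follows exactly the route the paper intends: the lemma is stated without an explicit proof, relying on the standard memory-sharing argument described earlier for the static single-access model, applied independently within each color class at the two neighbouring integer levels $\lfloor t\rfloor$ and $\lceil t\rceil$ with weights $\gamma$ and $1-\gamma$. Your bookkeeping of the normalizations ($\hat{K}=K/L$, $\hat{M}=MT$, the per-slot factor $1/T$, and the final factor $L$) matches the accounting in the proof of Theorem~1, and the cache-feasibility check $\gamma\lfloor t\rfloor+(1-\gamma)\lceil t\rceil=t$ is the right one.
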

\indent A simple example for $T=2$ is illustrated in Fig. \ref{cell}. As one can observe from Fig. \ref{cell}, three colors are sufficient to group the SBSs to ensure that in any mobility path of length two a MU always connects to two SBSs with different colors. Hence, in the placement phase of the given example, each file is initially divided into two fragments and these fragments are then encoded  using (3,2) MDS code to obtain 3 coded fragments which are labeled with colors {\color{green}green}, {\color{red}red} and {\color{blue} blue}. All the SBSs in the same group (i.e., those with the same color) cache the fragments that have been assigned the same color. Then, at each time slot, the coded delivery phase is executed for each group of SBSs independently.
\begin{remark}
We remark that the delivery rate achieved for a random path is higher than the one achieved for a deterministic path since the number of colors needed in general is greater than $T$, and it increases further with the number of colors used. Hence, in the single access model with mobility the objective is to identify the minimum $L$ such that the network is $L$-colorable. In the following section  we  study the optimal coloring strategy and the corresponding backhaul delivery rate.
\end{remark}
\subsection{Cell coloring}

\begin{figure*}
\centering
     \includegraphics[scale=0.4]{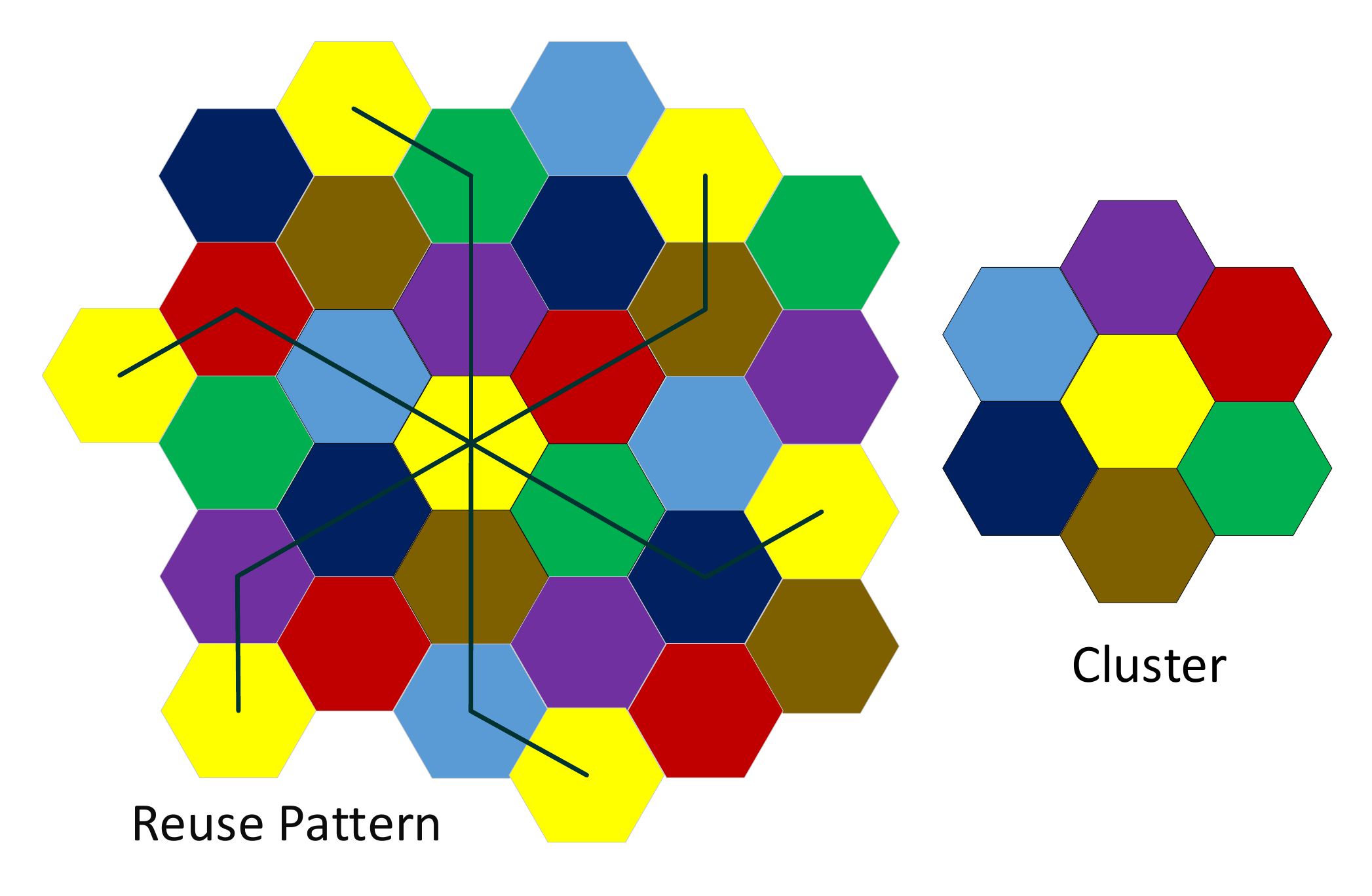}
				\caption{Frequency reuse pattern with i=2, j=1 and the corresponding cluster.}
		\label{reusecluster}
\end{figure*}
\indent For a given mobility path of length $T$, our objective is to color the cells using the minimum number of distinct colors while ensuring that, in any mobility path each color is encountered at most once. In general, for any given cell structure, and mobility length $T$, the cell coloring problem can be modeled as a vertex coloring problem. Consider $K$ SBSs with disjoint cells. We can consider each cell as a vertex of a graph $G$, and add an edge between vertices $k$ and $j$ if  there is a mobility path of length $T$ that contains both  cell $k$ and $j$. Once the graph $G$ is constructed, the chromatic number $\gamma(G)$ of this graph gives the minimum $L$ such that the network is $L$-colorable. This vertex coloring problem, i.e., finding the chromatic number $\gamma(G)$, is NP-hard \cite{complexity}. Hence, finding the optimal coloring, the minimum $L$, in a large network may not be feasible. Therefore in the scope of this paper we focus on the scaling behavior of the cell coloring problem i.e., for the given mobility path length $T$, what is the minimum $L$ as $K$ goes to infinity.\\  
\indent Let us limit our focus on hexagonal cells first. This problem is analogous to the well known {\em frequency reuse pattern} problem in cellular networks \cite{reusepattern}. In this problem, the same frequency is allocated to multiple cells to efficiently use the limited available spectrum while minimizing interference. We note that utilization of frequency reuse patterns in the coded delivery framework has been previously studied in \cite{CD.D2D} for limiting the interference in device-to-device communication.
The cells serving in the same frequency are called the {\em co-channel cells}. The co-channel cell locations are determined according to a given distance constraint (the distance between the center of two co-channel cells). In \cite{reusepattern}, a frequency reuse pattern (or, equivalently, a co-channel cell pattern) is defined via the integer-valued shift parameters $i$  and $j$ in the following way: starting from a cell, ``move $i$ cells along any chain of  hexagons; turn counter-clockwise 60 degrees; move $j$ cells along the chain that lies on this new heading". A frequency reuse pattern example with $i=2$ and $j=1$ is illustrated in  Fig. \ref{reusecluster}. When co-channel cells are identified with a same color, then the pattern of cells with different colors, so that the whole network is a repetition of this pattern, is called the {\em cluster}, which is illustrated in  Fig. \ref{reusecluster}. It is shown that using a reuse pattern with shift parameters $i$ and $j$, the two nearest co-channels are separated with a distance $D=\sqrt{3C}$ (scaled with the cell diameter), where $C=i^{2}+j^{2}+ij$ is the {\em cluster size} which refers to the total number of different frequencies used in the network.\\
\indent We remark that when the nearest co-channel cells are separated with a distance $D=\sqrt{3C}$ according to the reuse pattern with shift parameters $i$ and $j$,  a user in   a particular cell should visit at least $i+j$ (including the current cell) cells to reach the nearest co-channel cell, and by definition no two of these cells can be co-channel cells. Therefore,  the frequency reuse pattern problem is analogous to our problem, where the length of the mobility path $T$ is equivalent to $i+j$, and the cluster size $C$ is equivalent to the number of colors $L$. In our problem, we want to minimize the number of colors $L=T^{2}-ij$ for a given mobility length $T=i+j$. Hence,  we use the reuse pattern $(i,j)$, with $i=\left\lceil \frac{T}{2}\right\rceil$ and $j=\left\lfloor \frac{T}{2} \right\rfloor$ to minimize the number of colors $L$. The scaling behavior of the clusters with respect to $T$ is illustrated in Fig. \ref{scaling}.
\begin{remark}
We observe that when the reuse pattern $(i,j)$, with $i=\left\lceil \frac{T}{2}\right\rceil$ and $j=\left\lfloor \frac{T}{2} \right\rfloor$ is used for a given $T$, then in the corresponding cluster, it is possible to reach a cell from any other cell in $T$ steps so that each cluster corresponds to complete graph. 
\end{remark}

\begin{figure*}
         \begin{subfigure}[b]{0.47\textwidth}
          \centering
     \includegraphics[scale=0.3]{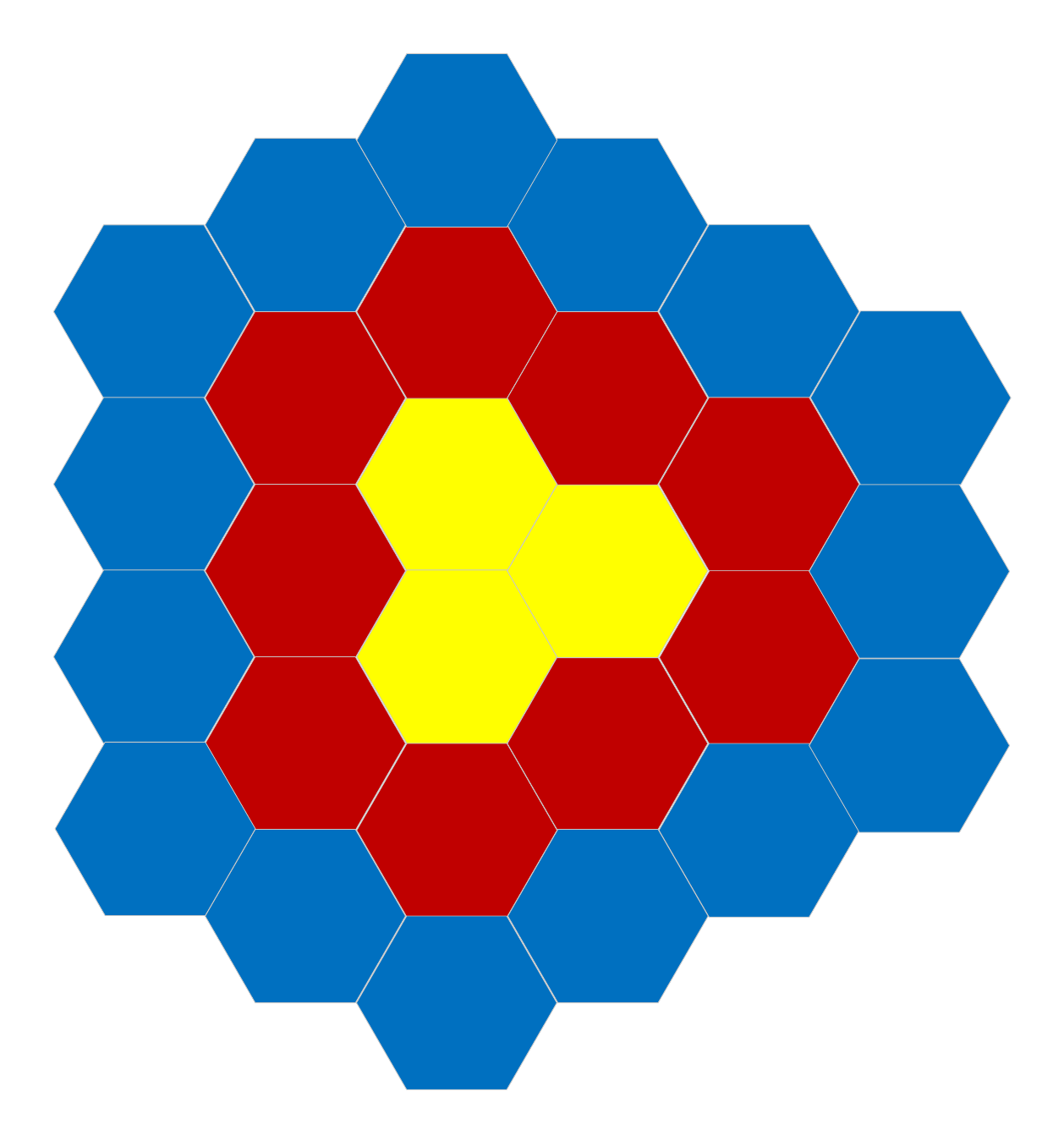}
				\caption{Clusters for $T=2$, $T=4$ and $T=6$ are illustrated with yellow, yellow and red, and  all three colors, respectively.} 
				\label{scalingeven}
    \end{subfigure}
    \begin{subfigure}[b]{0.47\textwidth}
     \centering
        \includegraphics[scale=0.3]{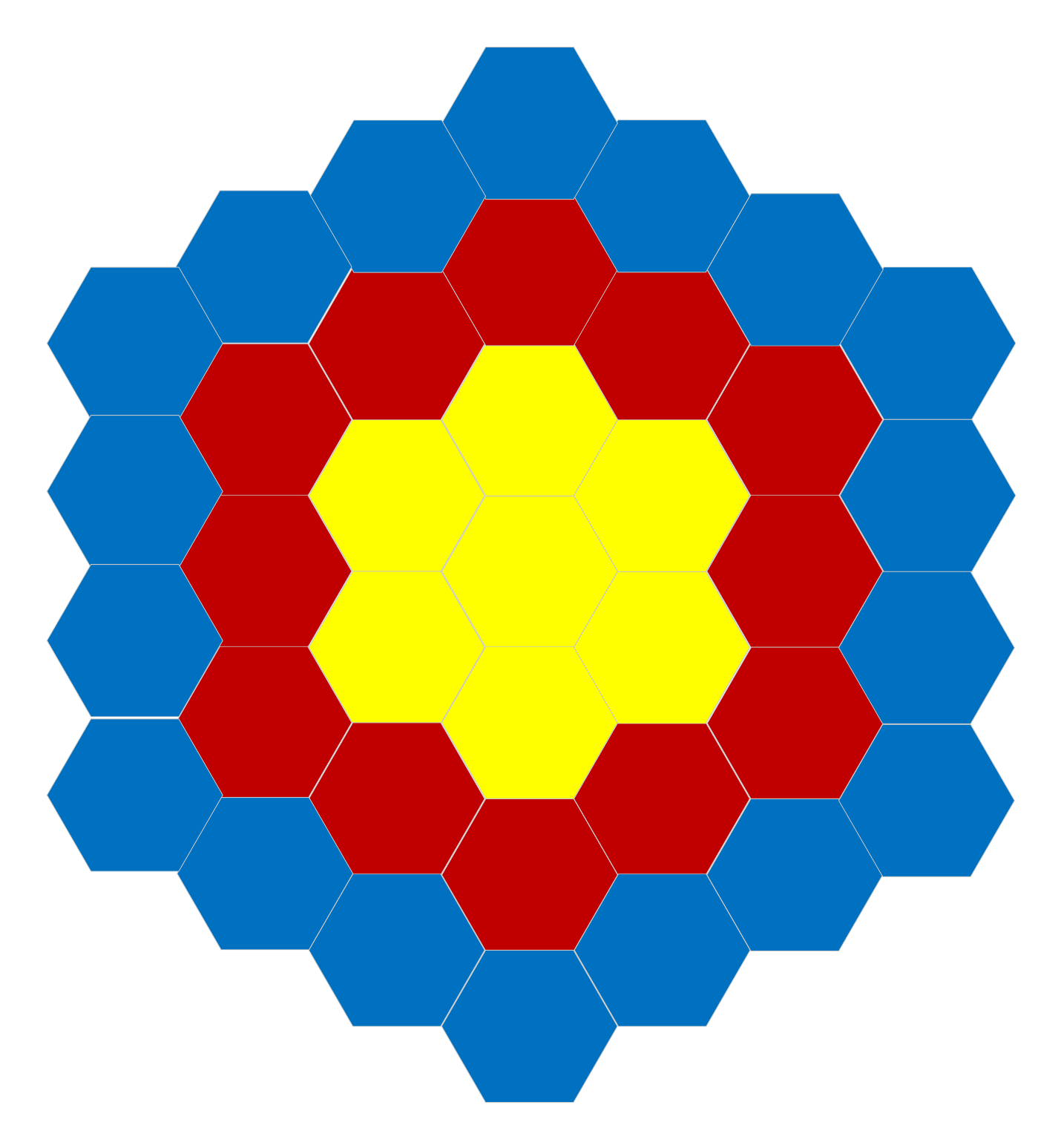}
        \caption{Clusters for $T=3$, $T=5$ and $T=7$ are illustrated with yellow, yellow and red, and all three colors, respectively.} 
				\label{scalingodd}
        \end{subfigure}
				\caption{Scaling behavior of clusters for hexagonel cells.}
		\label{scaling}
\end{figure*}
\begin{theorem}
For a network of SBSs with  hexagonal cells, and a given mobility path length $T$, the minimum $L$ such that the network is $L$-colorable is given by
\begin{equation}
L_{min}= 
    \begin{cases}
     3n^{2}, &  \text{if } T=2n, \\
      3n^{2}+3n+1,   &  \text{if } T=2n+1, 
    \end{cases}
\end{equation}
for some positive integer $n$.
\end{theorem}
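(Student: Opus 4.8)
The plan is to recast the problem as the computation of a chromatic number of an explicit infinite graph, and then to sandwich that number between a matching clique bound and an explicit colouring. First I would model the hexagonal network by its cell-adjacency graph, which is the triangular lattice $\Lambda$ (every cell has six edge-neighbours). A mobility path of length $T$ is a path on $T$ distinct vertices of $\Lambda$, so two cells can lie on a common length-$T$ path — equivalently, must receive distinct colours — precisely when their graph distance is at most $T-1$: the ``only if'' direction is immediate, and the ``if'' direction holds because in the infinite lattice any geodesic of length $\le T-1$ can be padded to a simple path on exactly $T$ vertices by appending fresh neighbours at one end. Hence the minimum feasible $L$ is exactly the chromatic number $\chi(G)$ of $G\defeq\Lambda^{T-1}$, the $(T-1)$-st power of the triangular lattice, and it suffices to prove $\chi(G)=L_{min}$.

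For the upper bound $\chi(G)\le L_{min}$ I would use the reuse pattern $(i,j)=(\lceil T/2\rceil,\lfloor T/2\rfloor)$ identified above and colour each cell by its coset modulo the associated reuse sublattice $\Lambda'$, generated by $v_1=i\omega_1+j\omega_2$ and its $60^\circ$ rotation $v_2$; this $\Lambda'$ has index $N=i^2+ij+j^2=L_{min}$ in $\Lambda$, so the colouring uses exactly $L_{min}$ colours. The one thing to verify is that it is proper for $G$, i.e. that every nonzero vector of $\Lambda'$ has graph distance $\ge T$ from the origin. The key observation is that $\Lambda'$ is itself a triangular lattice (a rotated, scaled copy of $\Lambda$), so its nonzero vectors have Euclidean length either $\sqrt N$ — exactly six of them, including $v_1$ — or at least $\sqrt{3N}$; a short computation with the triangular-lattice word metric shows the six shortest vectors have graph distance exactly $i+j=T$, and since every elementary step has unit Euclidean length, the remaining vectors have graph distance $\ge\sqrt{3N}>T$ (as $3N>(i+j)^2$). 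Thus $\min_{0\ne v\in\Lambda'}d(v)=T$, the colouring is proper, and $\chi(G)\le L_{min}$.

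For the matching lower bound $\chi(G)\ge L_{min}$ it is enough to exhibit a clique of size $L_{min}$ in $G$, i.e. a set of cells with pairwise graph distance $\le T-1$. When $T=2n+1$ I would take the radius-$n$ ball $B_n$ around a fixed cell: it has $3n^2+3n+1=L_{min}$ cells and diameter $2n=T-1$. When $T=2n$ no ball works (a radius-$(n-1)$ ball is too small and a radius-$n$ ball too wide), so instead I would take the set of all cells within graph distance $n-1$ of a fixed triangle of three mutually adjacent cells; its diameter is at most $2(n-1)+1=T-1$, and an inclusion–exclusion over the three radius-$(n-1)$ balls — using $|B_r|=3r^2+3r+1$, pairwise intersections of size $3r^2+r$, and triple intersection $3r^2$ — shows it has exactly $3((n-1)+1)^2=3n^2=L_{min}$ cells. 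Combining the two bounds gives $\chi(G)=L_{min}$, and since the clique bound constrains arbitrary colourings (not only lattice-periodic ones), no colouring can do better.

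I expect the main obstacle to be the even case of the lower bound: one must find the correct ``triangle-centred'' clique shape, since the naive ball fails, and then carry out the inclusion–exclusion that pins its size down to exactly $3n^2$. The secondary subtlety is the properness check for the reuse colouring: the crude estimate $d(v)\ge|v|$ alone does not reach $T$, so one has to exploit that $\Lambda'$ is a triangular lattice in order to jump the spectrum of Euclidean lengths from $\sqrt N$ to $\sqrt{3N}$.
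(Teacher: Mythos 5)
Your proof is correct and follows essentially the same route as the paper: achievability via the $(\lceil T/2\rceil,\lfloor T/2\rfloor)$ reuse-pattern colouring, and optimality via exhibiting a clique of size $L_{min}$ in the conflict graph --- the paper's ``cluster corresponds to a complete graph'' observation, which is exactly your radius-$n$ ball for odd $T$ and your triangle-centred set for even $T$. The only difference is one of rigour: the paper asserts both the properness of the colouring and the clique property without proof (deferring the latter to an unproven remark), whereas you actually establish them via the sublattice norm gap and the inclusion--exclusion count.
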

\begin{proof}
It is clear that  $L_{min}$ is achievable by using the proper reuse pattern explained. On the other hand, since the cluster corresponds to a complete graph, its chromatic index is equal to cluster size $L_{min}$ which implies that it is not possible to color network with less than $L_{min}$ different colors.
\end{proof}
We remark that this analysis can be extended to different network topologies. For instance, if we consider a square grid, the given reuse pattern can be modified by simply using 90-degree turns instead of 60. An example of a reuse pattern with i=2, j=2 is illustrated in Fig. \ref{reusesq} for a square grid.  
\begin{corollary}
For a network of SBS with  square cells, and a given mobility path length $T$, the minimum $L$ such that the network is $L$-colorable is given by
\begin{equation}
L_{min}= 
    \begin{cases}
     2n^{2}, &  \text{if } T=2n, \\
      2n^{2}+2n+1,   &  \text{if } T=2n+1, 
    \end{cases}
\end{equation}
for some positive integer $n$.
\end{corollary}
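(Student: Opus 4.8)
The plan is to mirror the two-step argument used for the hexagonal theorem: an explicit colouring gives the upper bound $\gamma(G)\le L_{min}$, and an explicit cluster (a clique in $G$) gives the matching lower bound $\gamma(G)\ge L_{min}$. The only structural change is that the $60^{\circ}$ turns in the hexagonal reuse pattern become $90^{\circ}$ turns, which replaces the cluster size $i^{2}+ij+j^{2}$ by $i^{2}+j^{2}$; minimising $i^{2}+j^{2}$ subject to $i+j=T$ forces the balanced choice $i=\lceil T/2\rceil$, $j=\lfloor T/2\rfloor$, which reproduces the two cases in the statement.

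For the upper bound I would colour the cell at lattice position $\mathbf{x}\in\mathbb{Z}^{2}$ by its residue modulo the co-channel sublattice $\Lambda$ generated by $(i,j)$ and its $90^{\circ}$ rotation $(-j,i)$ (the $4$-fold symmetry of the square grid makes both of these co-channel displacements). The number of colours is then the index $[\mathbb{Z}^{2}:\Lambda]=i^{2}+j^{2}=L_{min}$, so the only thing to verify is that the colouring is proper for $G$. A generic nonzero element of $\Lambda$ has the form $a(i,j)+b(-j,i)=(ai-bj,\,aj+bi)$ with $(a,b)\neq(0,0)$; a short case split on the signs of $a$ and $b$ (if $a,b$ agree in sign then $|aj+bi|\ge i+j$, if they disagree then $|ai-bj|\ge i+j$, and if one vanishes the norm is a multiple of $i+j$) shows its $\ell_{1}$-norm is at least $i+j=T$. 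Since a mobility path of length $T$ consists of $T$ cells with consecutive cells grid-adjacent, any two of its cells are within $\ell_{1}$-distance $T-1<T$, so no such path can contain two cells of the same colour; hence the network is $L_{min}$-colourable.

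For the lower bound I would produce a clique of size $L_{min}$ in $G$, which at once gives $\gamma(G)\ge L_{min}$. For odd $T=2n+1$ take the $\ell_{1}$-ball of radius $n$ about a cell: it has $2n^{2}+2n+1=L_{min}$ cells and $\ell_{1}$-diameter $2n=T-1$. For even $T=2n$ the $\ell_{1}$-ball of radius $n-1$ is one cell short per coset, so I would adjoin the $2n-1$ cells of the radius-$n$ shell lying on one side of the centre, say those with first coordinate at least $1$; this gives exactly $2n^{2}=L_{min}$ cells, and checking the extreme pairs shows the $\ell_{1}$-diameter is still at most $2n-1=T-1$. In either case any two cells of the set lie within $\ell_{1}$-distance $T-1$, so (placing them at one end of a path and extending in the infinite grid, or in a sufficiently large finite patch) they lie on a common self-avoiding mobility path of length $T$; thus the set is a clique in $G$. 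Moreover, since its diameter is strictly smaller than $T$, which is the shortest nonzero $\ell_{1}$-length in $\Lambda$, no two of its cells are congruent modulo $\Lambda$, so the set is a full system of coset representatives --- a cluster --- which is the square-grid analogue of the remark that a hexagonal cluster induces a complete graph.

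The step I expect to be the real obstacle is the lower bound for even $T$: one has to find the right ``compact'' cluster shape. The most natural fundamental domain of $\Lambda$ is a $2n\times n$ rectangle, but its $\ell_{1}$-diameter is $3n-2$, which exceeds $T-1$ for $n\ge 2$, so that rectangle is not a clique in $G$; this is why one is forced to the diamond-plus-half-shell set above and to verify its diameter bound, and it is precisely here that the square geometry differs from the hexagonal one. The remaining ingredients --- the index $[\mathbb{Z}^{2}:\Lambda]=i^{2}+j^{2}$, the sign case analysis bounding the shortest vector of $\Lambda$, and the minimisation of $i^{2}+j^{2}$ under $i+j=T$ --- are routine.
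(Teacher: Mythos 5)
Your proposal is correct and follows essentially the same route as the paper: the upper bound via the square-grid reuse pattern with $90^{\circ}$ turns (cluster size $i^{2}+j^{2}$ minimised at $i=\lceil T/2\rceil$, $j=\lfloor T/2\rfloor$), and the lower bound by exhibiting a cluster that forms a complete graph in $G$. The paper states the corollary without spelling out these verifications, so your explicit shortest-vector bound for $\Lambda$ and your observation that the clique must be the diamond-shaped ($\ell_{1}$-ball plus half-shell) fundamental domain rather than the naive rectangle are welcome details, but they do not constitute a different method.
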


\begin{figure}[t]
    \centering
        \includegraphics[scale=0.4]{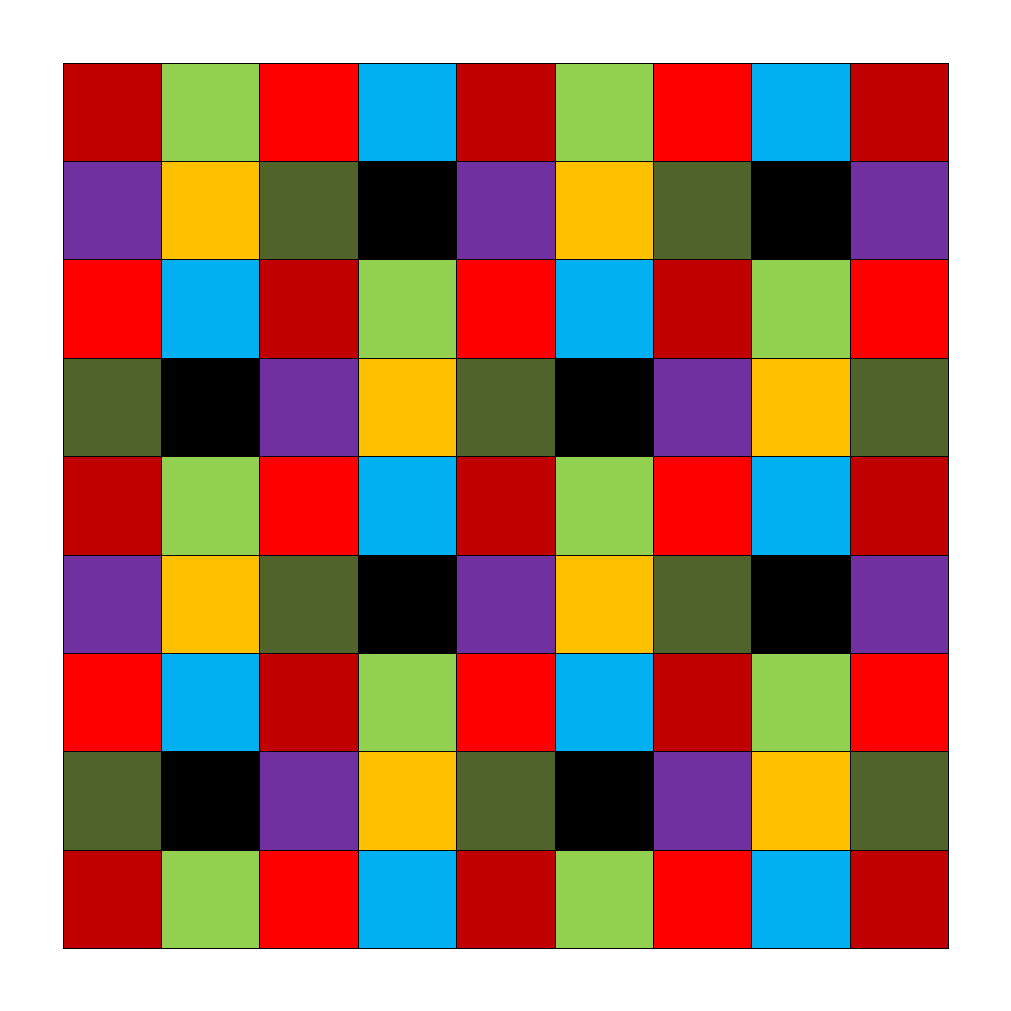}
				\caption{Cell coloring in a square grid according to the reuse pattern with i=2, j=2.}
				\label{reusesq}
\end{figure}
The scaling behavior of the clusters in a square cell topology is illustrated in  Fig. \ref{scaling2}. We remark that as the number of neighboring cells increases, the cluster size also increases. For instance, for a  mobility path length $T=4$,  the corresponding cluster size is 12 for hexagonal cells whereas  it is 8 for the square cells. Recall that the delivery rate of the mobility-aware coded delivery scheme increases with the cluster size $L$. Hence, we can conclude that the mobility-aware scheme performs better when the number of cells a MU can move into at each step is smaller, or equivalently, the mobility pattern has less uncertainty.

\begin{figure*}
         \begin{subfigure}[b]{0.47\textwidth}
          \centering
     \includegraphics[scale=0.4]{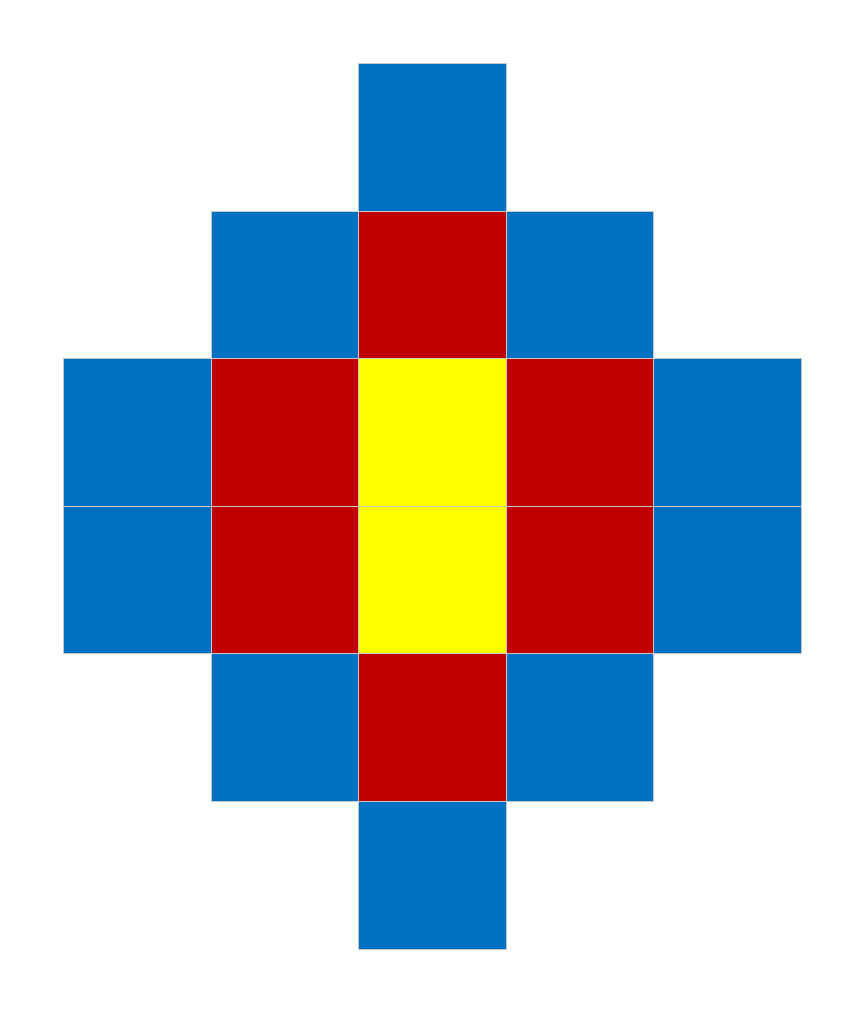}
					\caption{Clusters for $T=2$, $T=4$ and $T=6$ are illustrated with yellow, yellow and red, and  all three colors, respectively.} 
				\label{scaling2even}
    \end{subfigure}
    \begin{subfigure}[b]{0.47\textwidth}
     \centering
        \includegraphics[scale=0.2]{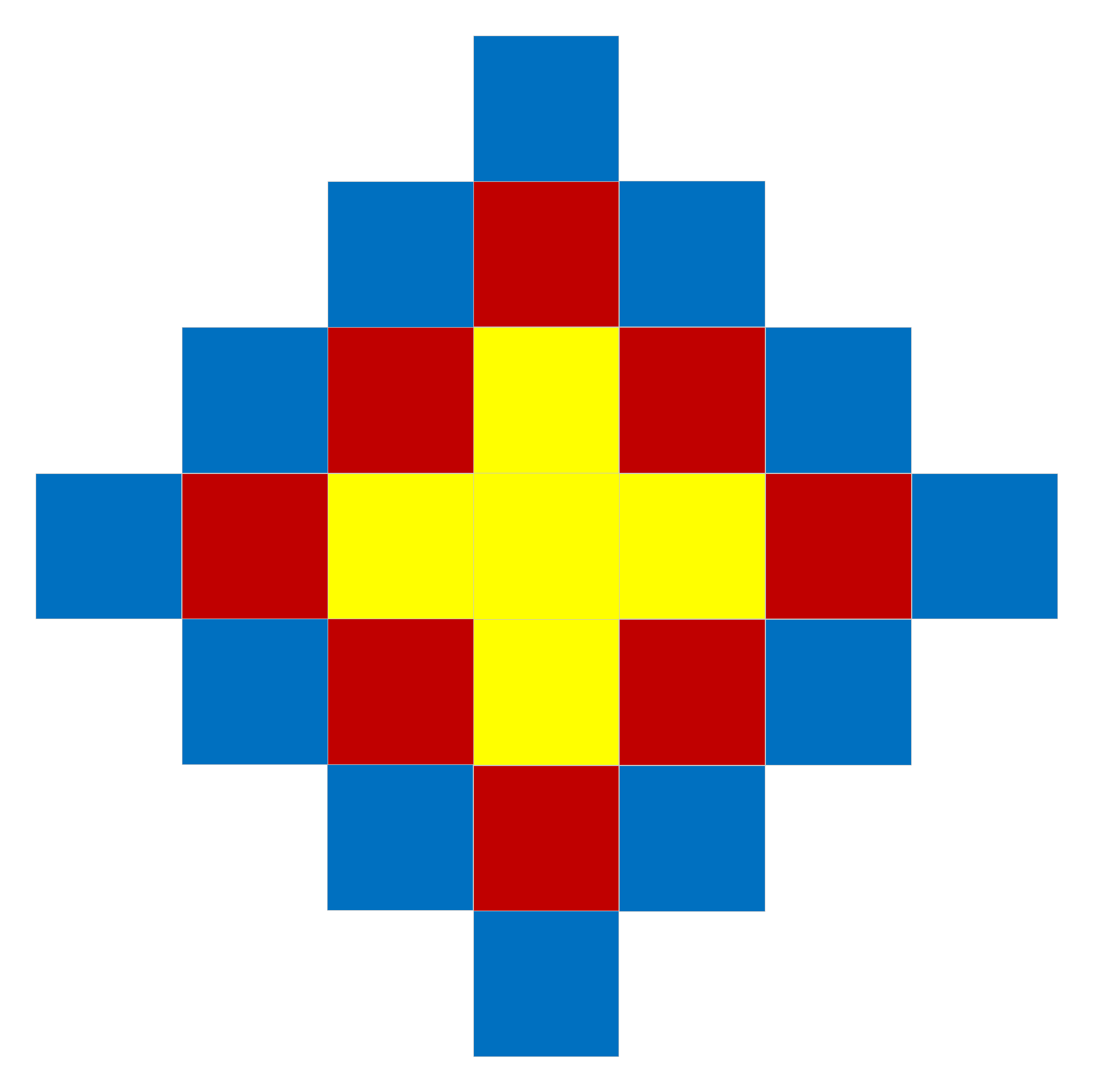}
        \caption{Clusters for $T=3$, $T=5$ and $T=7$ are illustrated with yellow, yellow and red, and  all three colors, respectively} 
				\label{scaling2odd}
        \end{subfigure}
				\caption{Scaling behavior of clusters for square cells.}
		\label{scaling2}
\end{figure*}


 \section{Numerical Results} \label{sec:results}
For the simulations, we consider two  network topologies with $K=24$ and $K=48$ SBSs of hexagonal shapes, respectively. We consider a mobility path of length $T=2$. Hence, the cells are colored according to the reuse pattern $(i=1,j=1)$ with a total of $L=3$ colors as in Fig. \ref{cell}.
 We compare the performance of our mobility-aware coded delivery scheme with the coded delivery scheme of \cite{CD.F1} in terms of two metrics: the number of required sub-files and the normalized backhaul delivery rate. For each topology, we analyze the performance of these schemes for two different storage capacities  of $M/N=1/4$ and $M/N =1/8$, respectively. The numerical results are presented in Table \ref{simres}.\\
\indent In the first group of simulations we analyze the network topology  with $K=24$ SBSs and observe that our mobility-aware coded delivery scheme reduces the number of sub-files dramatically. When, $M/N=1/8$ our mobility-aware coded delivery scheme has $12.5\%$ increase in the delivery rate, while reducing the number of sub-files by approximately $1/72$. The more interesting results are observed when the storage capability is higher, i.e., $M/N=1/4$. In this case, the proposed mobility-aware coded delivery scheme outperforms the original  coded delivery scheme in both performance metrics. At the first glance this might be counterintuitive since there is a trade-off between the delivery rate and the number of sub-files \cite{lowsubpacket2}. However, the mobility-aware approach not only utilizes the multicasting gain, but  also the multi-access gain, which is clearly visible in the deterministic path scenario. In this network setting the number of required sub-files goes from $269000$ down to $140$. \\    
\indent In the second group of simulations, we analyze the network topology  with $K=48$ SBSs. When  $M/N=1/8$ our mobility-aware coded delivery scheme results in a $20\%$ increase in the delivery rate, while reducing the number of sub-files by approximately four orders of magnitude. Hence, thanks to the proposed approach, coded delivery with caching can be practically realizable with only $20\%$ increase in the delay.

At this point, one can argue that the number of sub-files could also be reduced by simply clustering the SBSs to obtain two sub-networks with $K/2$ SBSs, and then applying the coded delivery scheme to each sub-network independently. Indeed, the clustering approach could reduce the number of sub-files significantly; however, it leads to a further increase in the backhaul delivery rate. The results with the clustering approach, assuming two clusters, each consisting of $K/2=24$ SBSs, are included in Table  \ref{simres}. We note that when there are two clusters, the corresponding delivery rate is simply the sum of the delivery rates corresponding to each cluster. Hence, the coded delivery scheme with two clusters uses the same number of sub-files as the coded delivery scheme for the network topology with $K=24$ SBSs; however, the delivery rate is doubled. One can easily observe that for both $M/N=1/8$ and $M/N=1/4$ our mobility-aware coded delivery scheme outperforms the coded delivery scheme with two clusters in terms of both performance metrics. We also observe that the mobility-aware coded delivery approach becomes more efficient compared to the other two schemes, particularly when the storage capacity is high. To highlight this fact, for $T=2$, consider the extreme point $M/N=1/2$. In this case the backhaul delivery rate reduces  to zero, while the number of subfiles is only two.\\
\indent We remark that a more sophisticated scheme, such as the one utilizing the erasure code design in \cite{lowsubpacket2}, can be  also applied to seek a balance between the number-of sub-files and the delivery rate. To this end, we consider the  scenario in Example 9 in \cite{lowsubpacket2}, where there are $K=60$ SBSs with hexagonal shapes and $M/N=1/5$. Similarly to the previous setup we consider $T=2$. In this setup, the original coded delivery scheme achieves a slightly lower delivery rate compared to the mobility-aware scheme with approximately $10^{7}$ times more sub-files. To illustrate the efficiency of the mobility-aware scheme we can limit the number of files to be less than $10^{7}$ and then compare the achievable delivery rates. Performances of the clustering method and the block code design in \cite{lowsubpacket2}, under the subpacketization constraint  are shown in Table \ref{simres2}. One can observe that the proposed mobility-aware caching scheme  outperforms the clustering scheme and the block code design both in terms of the delivery rate under the subpacketization constraint. At this point, it is worth emphasizing that, performance of the mobility-aware scheme depends on the mobility length $T$, thus to reduce the number of sub-files further, for each coded fragment placement scheme in \cite{lowsubpacket2} is used instead of the one in \cite{CD.F1}.

\begin{table*}
    \centering
    \begin{tabular}{| M{3cm} | M{7.5cm} | M{2cm} |M{2.5cm}|}
    \hline
     Storage capacity ($M/N$) & Coded delivery method and network scenario & Number of sub-files &Normalized Delivery rate\\ \hline
     \multirow{2}{*}{$\frac{1}{8}$} & Coded delivery \cite{CD.F1}, for $K=24$ & 4048 & 5.25\\
     & Mobility-aware coded delivery for $K=24$ & 56 & 6 \\ \hline 
 		 \multirow{2}{*}{$\frac{1}{4}$} & Coded delivery \cite{CD.F1}, for $K=24$  & $2.69\times 10^{5}$ & 2.57\\     
      & Mobility-aware coded delivery for $K=24$  & $140$ & 2.4 \\ \hline 
           \multirow{3}{*}{$\frac{1}{8}$} & Coded delivery \cite{CD.F1}, for $K=48$   & $2.45\times 10^{7}$ & 6\\
     & Coded delivery for $K=48$  with clustering & $4048$ & 10.5\\     
     & Mobility-aware coded delivery for $K=48$  & $3640$ & 7.2\\ \hline 
 		 \multirow{3}{*}{$\frac{1}{4}$} & Coded delivery \cite{CD.F1}, for $K=48$  & $1.39\times 10^{11}$ & 2.77\\ 
         & Coded delivery for $K=48$ with clustering & $2.69\times 10^{5}$ & 5.14\\
      & Mobility-aware coded delivery for $K=48$ & $2.57 \times 10^{4}$ & 2.66 \\ \hline 
    \end{tabular}
	\caption{Comparison of the proposed mobility-aware coded storage and  delivery scheme with the conventional coded delivery scheme of \cite{CD.F1} and a coded delivery scheme with clustering in terms of the number of required sub-files and the normalized delivery rate.}
		\label{simres}		
\end{table*}
 \begin{table*}
\begin{center}
    \begin{tabular}{ |  c | c |c |}
    \hline
      Coded delivery method  & Number of sub-files &Normalized Delivery rate\\ \hline
       Coded delivery \cite{CD.F1} & $2.8\times 10^{12}$ & 3.69\\ \hline 
      Mobility-aware coded delivery & 251940 & 4 \\ \hline 
 		  Coded delivery using (12,8) block code \cite{lowsubpacket2} & $2.34\times 10^{6}$ & 5.33\\ \hline 
           Coded delivery with two clusters & $1.18\times 10^{6}$ & 6.85\\ \hline
     \end{tabular}		
 \end{center}
 \caption{Comparison of the proposed mobility-aware coded storage and  delivery scheme with the conventional coded delivery scheme of \cite{CD.F1} and a coded delivery scheme using block code designs terms of the number of required sub-files and the normalized delivery rate.}
 \label{simres2}
\end{table*}
\subsection{Non-uniform file popularity}
The introduced delivery rate analysis for the proposed mobility-aware scheme is based on uniform popularity assumption, which reflects the worst case performance. However, in real on-demand content streaming applications, uniform popularity assumption is not realistic; indeed, many works analyzing the content popularity statistics \cite{vs1,VS2,VS3,VS4} show that only a small fraction of the contents are requested frequently, and the content popularity statistics can be modeled by  Zipf and Weibull distributions with appropriate parameters. Performance of coded delivery schemes can be further improved via utilizing the popularity distribution by allocating more cache memory to popular files, instead of sharing the local cache memory between the files uniformly. In the case of non-uniform popularity, the objective is to minimize the expected delivery rate for a given popularity distribution.\\
\indent In the literature, several different schemes have been introduced for the  non-uniform demand problem \cite{CD.ND1,CD.ND2,CD.ND3,CD.ND4,CD.ND5,CD.ND6,CD.ND7,CD.ND8}. A simple yet efficient scheme, proposed in \cite{CD.ND1}, groups files according to their popularities. One particular implementation of this scheme is the {\em file removal strategy}, in which the whole library  is divided into two groups, namely {\em popular} and {\em unpopular} files, and  only the files in the first group are cached with equal cache memory allocation. We remark that, under the file removal strategy, caching less files (those more popular ones) with a larger value of $t$, decreases the delivery rate for the cached files, but also increases the likelihood of requesting an uncached file. Hence, the optimal cache placement strategy decides the number of files to be cached, and once the  expected delivery rate is written in terms of the number of cached files, it can be easily optimized. For our analysis, we utilize this strategy for the non-uniform demand case.\\
\indent Let $\Pi(K,M,N,N_{c})$ denote the cache placement policy, where $N_{c} \leq N$ denotes the number of cached files. Under a cache placement policy $\Pi$, the required delivery rate for a given demand vector $\mathbf{d}$ can be written as the sum of two rate functions corresponding to cached and uncached files, i.e.,
\begin{equation}
R_{total}(\Pi,\mathbf{d})=R_{c}(M,N_{c},K,\mathbf{d})+R_{u}(N-N_{c},\mathbf{d}).\label{rtotal}
\end{equation}
The delivery rate corresponding to uncached files, $R_{u}$, is simply equal to the number of requests for unchached files, while the delivery rate corresponding to cached files can be written as,
\begin{equation}
R_{c}(M,N_{c},K,\mathbf{d})=R(M,N_{c},K)- \sum^{T}_{\tau=1}\sum^{L}_{l=1}\left(\gamma\frac{ {N^{(l,\tau)}_{u}(\mathbf{d}) \choose \left\lfloor t\right\rfloor+1}}{{\hat{K} \choose \left\lfloor t\right\rfloor}}+ (1-\gamma)\frac{{ N^{(l,\tau)}_{u}(\mathbf{d}) \choose \left\lceil t\right\rceil+1}}{{\hat{K} \choose \left\lceil t\right\rceil}}
\right)\frac{1}{T},\label{rtotal2}
\end{equation}
where $N^{(l,\tau)}_{u}$ denotes the number of unchached requests at time slot $\tau$ in cluster $l$, and $t\defeq\frac{KMT}{N_{c}L}$. Then, the expected total delivery rate, over the demand distribution, can be formulated as 
\begin{align}
\mathbb{E}_{\mathbf{d}}\left[R_{total}(\Pi,\mathbf{d})\right]= & R(M,N_{c},K)- \mathbb{E}_{\mathbf{d}}\left[\sum^{T}_{\tau=1}\sum^{L}_{l=1}\left(\gamma\frac{ {N^{(l,\tau)}_{u}(\mathbf{d}) \choose \left\lfloor t\right\rfloor+1}}{{\hat{K} \choose \left\lfloor t\right\rfloor}}+ (1-\gamma)\frac{{ N^{(l,\tau)}_{u}(\mathbf{d}) \choose \left\lceil t\right\rceil+1}}{{\hat{K} \choose \left\lceil t\right\rceil}}
\right)\right]\frac{1}{T}+\underbrace{\mathbb{E}_{\mathbf{d}}\left[ \sum^{T}_{\tau=1}\sum^{L}_{l=1}N^{(l,\tau)}_{u}(\mathbf{d})\right]\frac{1}{T}}_{K\times (1-p_{c}(N,N_{c}))}\label{rtotal31}\\
& \leq R(M,N_{c},K) + K \times (1-p_{c}(N,N_{c})).\label{rtotal32}
\end{align}
Let $\mathbf{P}=[p_{1},\ldots,p_{N}]$ denote the popularity vector,  where $p_{n}$ is the probability of file $W_n$ being requested. We assume, without loss of generality, that $p_{1}\geq p_{2}\geq\ldots\geq p_{N}$ and we introduce $p_{c}(N,N_{c})\defeq\sum^{N_{c}}_{n=1} p_{n}$ to denote the probability of requesting a cached file. For given parameters $K$, $M$, $N$, $L$, $T$ and $\mathbf{P}$, our objective is to find an optimal cache placement policy $\Pi^{\star}$ that minimizes (\ref{rtotal31}). We note that, in general, the second term in (\ref{rtotal31}) is negligible\footnote{The ratio between the first two term scales with $(K/N_{u})^{t+1}$, thus when $t\geq 2$ the first term dominates the second term unless $K/N_{u}<2$, but when $K/N_{u}<2$, then $R_{u}$ becomes the dominant factor. Hence, this assumption is reasonable in general.} compared to the other terms; besides, the second term in (\ref{rtotal31}) has no closed form expression which makes it difficult to find the optimal cache placement policy  $\Pi^{\star}$. Hence, for the popularity-aware cache placement strategy, we minimize the upper bound given in (\ref{rtotal32}) instead of (\ref{rtotal31}). One can easily observe that finding the optimal cache placement policy  $\Pi^{\star}$ is simply finding the value of $N_{c}$ that minimizes (\ref{rtotal32}). Therefore, the optimal cache placement strategy can be found easily by searching over all possible values of $N_{c}$, with a computational complexity of $\mathcal{O}(N)$.\\
\begin{figure*}
         \begin{subfigure}[b]{0.47\textwidth}
          \centering
     \includegraphics[scale=0.5]{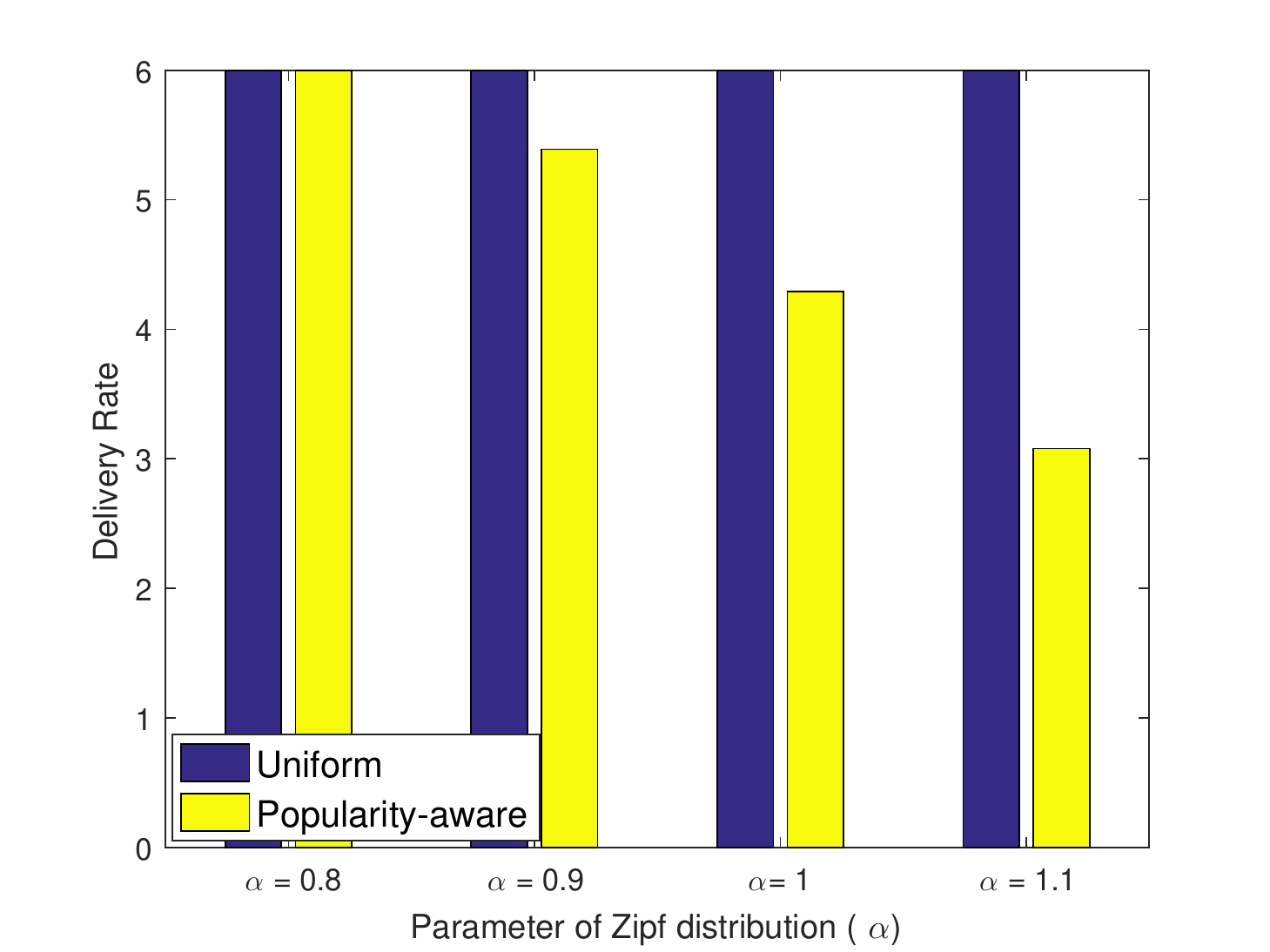}
					\caption{M=150, N=1200 and K=24} 
				\label{popularity1}
    \end{subfigure}
    \begin{subfigure}[b]{0.47\textwidth}
     \centering
        \includegraphics[scale=0.5]{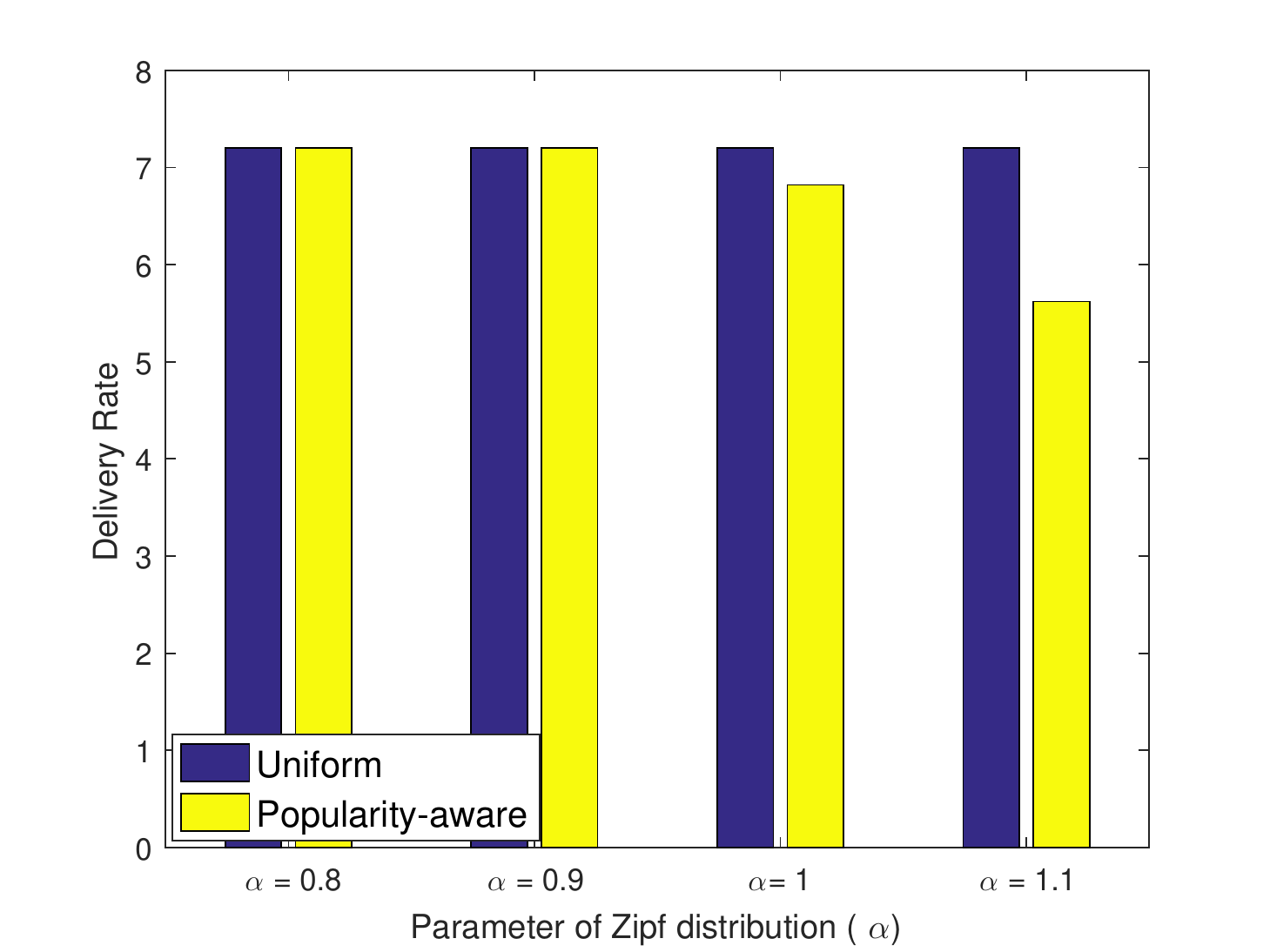}
        \caption{M=150, N=1200 and K=48} 
				\label{popularity2}
        \end{subfigure}
				\caption{Comparison of uniform caching with popularity-aware caching}
		\label{non-uniform}
\end{figure*}
\indent To analyze the performance of the popularity and mobility-aware caching scheme, we consider the following setup with cache memory size $M=150$, library size $N=1200$, session duration $T=2$, and number of clusters $L=3$. To model the request probability of the files in the library, we use Zipf distribution with parameter $\alpha$ which illustrates the skewness of the distribution. We consider two scenario with number of SBSs $K=24$ and $K=48$, and for each scenario we consider four different  $\alpha$ values $\left\{ 0.8,0.9,1,1.1 \right\}$. The expected delivery rate under uniform cache placement and popularity-aware cache placement schemes are illustrated in Fig. \ref{non-uniform}. In the first scenario, for K=24, we observe that up to $50\%$ reduction on the delivery rate is achievable by  employing mobility-aware scheme together with the popularity-aware placement  compared to the mobility-aware scheme with uniform cache placement. We also observe that the impact of  popularity-aware placement on the delivery rate is less visible in the second scenario with $K=48$. Therefore, we conclude that popularity-aware placement has more visible impact when $t$ is small.

\subsection{Multi-user scenario with extended mobility model}
To highlight  the main ideas of the proposed mobility-aware caching and delivery scheme, we have so far focused on a simple setting, in which there is exactly one MU connecting to each SBS. Nevertheless, the results can be extended to more general scenarios, with multiple MUs in each cell. One can easily observe that if there are exactly $Q$  MUs in each cell, then the delivery rate linearly scales with $Q$. Consequently, we would like to see how efficient  the proposed mobility-aware strategy is for general random mobility scenario, in which a MU can remain in the same cell for more than one time slot, and the number of MUs in each cell may not be equal. We remark that, SBSs, particularly those with small coverage areas, such as picocells, can serve limited number of users. To this end, we introduce SBS service capacity, $Q_{s}$, as the maximum number of MUs that can be served by a SBS in a time slot. Further, we introduce the term $\lambda$ to illustrate the user density, which denotes the ratio of the average number of MUs connecting to a SBS with service capacity $Q_{s}$.\\
\indent To measure the performance of the proposed mobility-aware scheme, we analyze the offloading rate of the network, which reflects how efficiently the SBSs are utilized. We define the offloading rate, $\sigma(K,T,Q_{s})$ as the ratio of the average number of fragments served by the SBSs during a session of $T$ time slots to the overall service capacity of the SBSs, given by $Q_{s}\times K\times T$. Hence, if each MU stays at most one time slot in a cell, and if there are at least $Q_{s}$ MUs in each cell at each time slot, then the SBSs are fully utilized and $\sigma(K,T,Q_{s})=1$.\\ 
\indent For the simulations, we consider a $6\times4$ square grid topology with $K=24$ SBSs, and set $Q_{s}=20$. In the simulation, we consider session duration of $T=4$ and $T=5$, and for each session duration we increase the user density parameter $\lambda$ from 1.25 to 2.25 with a step size of 0.25. To model the mobility of the MUs, different stochastic models such as Markov processes \cite{mobility1,mobility2,mobility3,mobility4} and Poisson processes \cite{G.M1,G.M6} can be employed. Markov process based models have been shown to provide accurate models, based on real mobility traces, particularly for vehicular users \cite{mobility2,mobility4}. Hence, we model MUs mobility pattern with a discrete time Markov process, whose state is the current cell location, and we assume uniform transition probabilities; that is, for the square grid topology, a MU stays in a cell with probability 0.2, and moves one of the 4 neighbouring cells with the same probability. Initially, we allocate $Q_{s}\times \lambda$ MUs in each cell and then let users move according to given discrete time Markov process. Simulation results for $T=4$ and $T=5$ are illustrated in Fig. \ref{offload}.\\
\indent  We note that there are two factors that can reduce the offloading rate of the mobility-aware scheme; first, the non-uniform distribution of the MUs; that is, if the number of MUs in a cell is less than $Q_{s}$ then this particular SBS will be underutilized; second, a MU cannot be served by the SBS more than one time slot due to the placement scheme used. The first factor is a common issue for the  performance of existing coded caching schemes; however, the second factor is particular to the proposed mobility-aware scheme. To measure the effect of the second factor we also measure the offloading rate of the mobility-aware scheme ignoring the limitation due to the placement phase, so that a SBS can serve more than one fragment. We observe that, as illustrated in Fig. \ref{offload}, mobility-aware scheme achieves an offloading rate close to $0.9$ in general. As expected, we also observe that the offloading rate increase with user density and decreases with $T$. Finally, another important observation is that relaxation of the limitation due to the placement phase (a SBS can serve exactly one fragment), has a negligible affect on the offloading rate. This observation simply implies that the first factor,the non-uniform distribution of MUs among the cells, is the dominant factor on the performance of the proposed mobility-aware scheme.
\begin{figure*}
         \begin{subfigure}[b]{0.47\textwidth}
          \centering
     \includegraphics[scale=0.5]{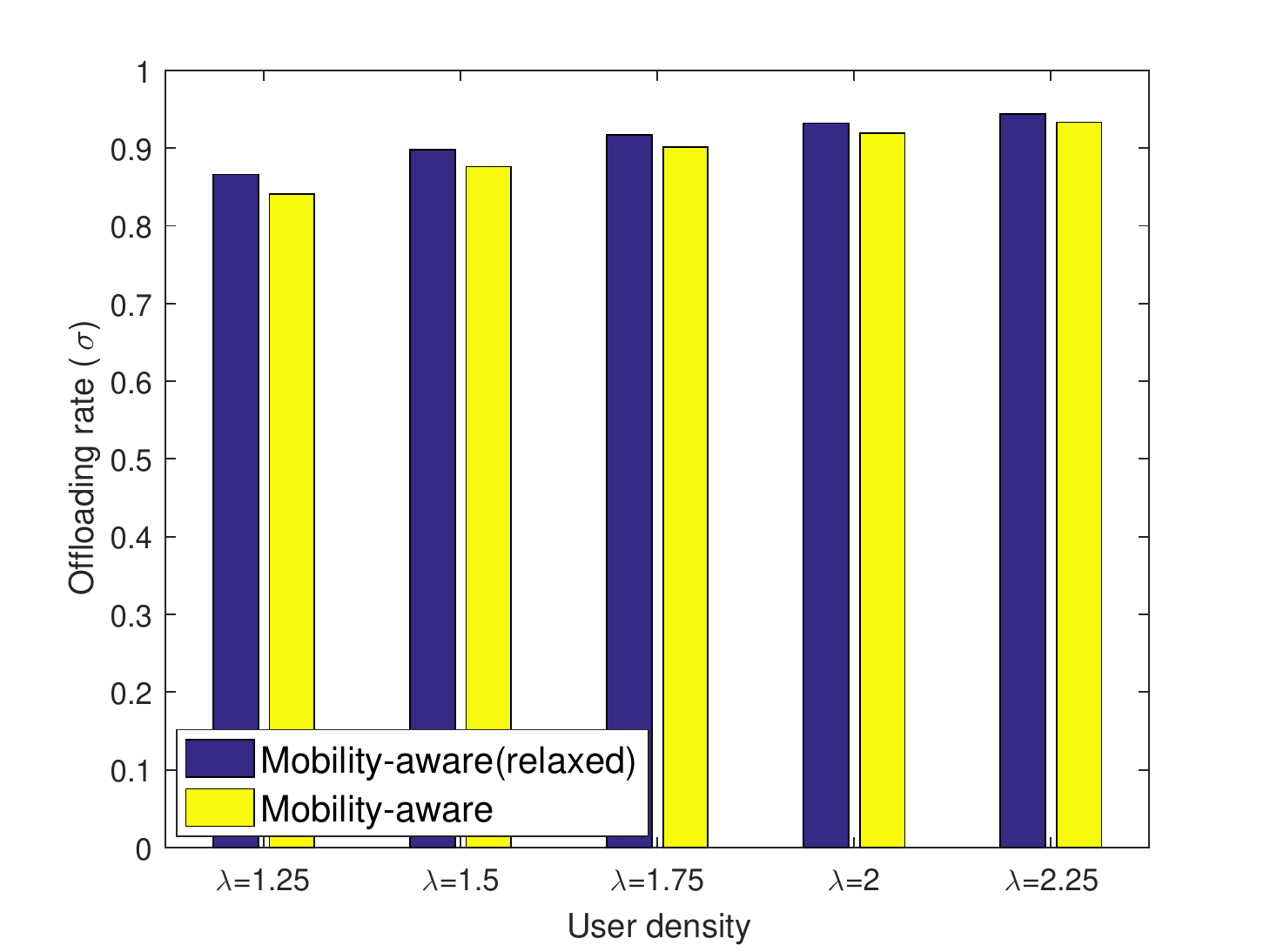}
					\caption{Service capacity $Q_{s}=20$ and download duration $T=4$.} 
				\label{offload1}
    \end{subfigure}
    \begin{subfigure}[b]{0.47\textwidth}
     \centering
        \includegraphics[scale=0.5]{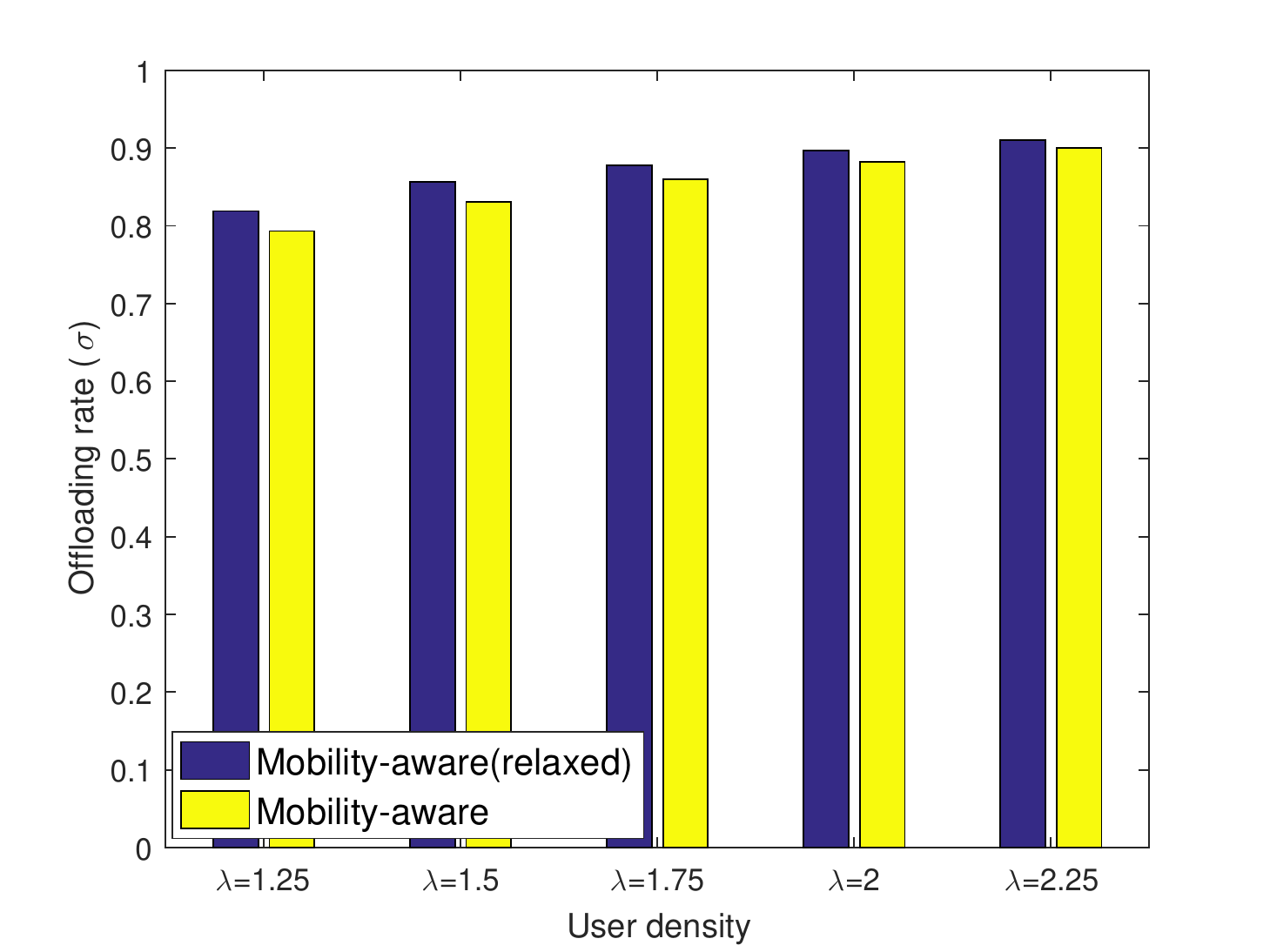}
        \caption{Service capacity $Q_{s}=20$ and download duration $T=5$.} 
				\label{offload2}
        \end{subfigure}
				\caption{Offloading rate, $\sigma(K,T,Q_{s})$, of the proposed mobility-aware scheme under random mobility scenario in a $6\times4$ square grid topology.}
		\label{offload}
\end{figure*}

 \section{Conclusions}\label{sec:conc}
We have introduced a novel MDS-coded storage and coded delivery scheme that adopts its caching strategy to the mobility patterns of the users. Our scheme exploits a coloring scheme for the SBSs, inspired by frequency reuse patterns in cellular networks, that have been extensively studied in the past to reduce interference. The files in the library are divided into sub-files, which are MDS-coded, and stored in the SBS caches, allowing users to satisfy their demands from multiple SBSs on their path under a high mobility assumption. We have shown that the proposed strategy achieves a significant reduction in the number of sub-files; particularly when the number of sub-files that can be created is limited, either due to the finite file size or to limit the complexity of the caching and delivery scheme, the proposed scheme provides significant gains in the backhaul load. We have also shown that the benefits of the proposed mobility-aware scheme extends also to non-uniform popularity distributions as well as to more general mobility scenarios allowing arbitrary random mobility patterns and multiple users being served by each SBS simultaneously.

%
\IEEEpeerreviewmaketitle

\bibliographystyle{IEEEtran}
\bibliography{IEEEabrv,ref}

\begin{thebibliography}{10}
\providecommand{\url}[1]{#1}
\csname url@samestyle\endcsname
\providecommand{\newblock}{\relax}
\providecommand{\bibinfo}[2]{#2}
\providecommand{\BIBentrySTDinterwordspacing}{\spaceskip=0pt\relax}
\providecommand{\BIBentryALTinterwordstretchfactor}{4}
\providecommand{\BIBentryALTinterwordspacing}{\spaceskip=\fontdimen2\font plus
\BIBentryALTinterwordstretchfactor\fontdimen3\font minus
  \fontdimen4\font\relax}
\providecommand{\BIBforeignlanguage}[2]{{%
\expandafter\ifx\csname l@#1\endcsname\relax
\typeout{** WARNING: IEEEtran.bst: No hyphenation pattern has been}%
\typeout{** loaded for the language `#1'. Using the pattern for}%
\typeout{** the default language instead.}%
\else
\language=\csname l@#1\endcsname
\fi
#2}}
\providecommand{\BIBdecl}{\relax}
\BIBdecl

\bibitem{sandvine}
Sandvine, ``The {G}lobal internet phenomena report,'' October 2018, White
  Paper.

\bibitem{cisco}
Cisco, ``Cisco visual networking : Forecast and methodology, 2017-2022,''
  November 2018, White Paper.

\bibitem{femto}
K.~Shanmugam, N.~Golrezaei, A.~G. Dimakis, A.~F. Molisch, and G.~Caire,
  ``Femtocaching: Wireless content delivery through distributed caching
  helpers,'' \emph{IEEE Trans. Inf. Theory}, vol.~59, Dec. 2013.

\bibitem{SBScache}
X.~Xu and M.~Tao, ``Modeling, analysis, and optimization of coded caching in
  small-cell networks,'' \emph{IEEE Transactions on Communications}, vol.~65,
  no.~8, pp. 3415--3428, Aug 2017.

\bibitem{SBScoop1}
J.~Liao, K.~K. Wong, Y.~Zhang, Z.~Zheng, and K.~Yang, ``Coding, multicast, and
  cooperation for cache- enabled heterogeneous small cell networks,''
  \emph{IEEE Transactions on Wireless Communications}, vol.~16, no.~10, pp.
  6838--6853, Oct 2017.

\bibitem{SBScoop2}
S.~Zhang, P.~He, K.~Suto, P.~Yang, L.~Zhao, and X.~Shen, ``Cooperative edge
  caching in user-centric clustered mobile networks,'' \emph{IEEE Transactions
  on Mobile Computing}, vol.~17, no.~8, pp. 1791--1805, Aug 2018.

\bibitem{CC.M1}
K.~Poularakis and L.~Tassiulas, ``Code, cache and deliver on the move: A novel
  caching paradigm in hyper-dense small-cell networks,'' \emph{IEEE
  Transactions on Mobile Computing}, vol.~16, March 2017.

\bibitem{CC.M2}
E.~Ozfatura and D.~G{\"u}nd{\"u}z, ``Mobility and popularity-aware coded
  small-cell caching,'' \emph{IEEE Communications Letters}, vol.~22, no.~2, pp.
  288--291, Feb 2018.

\bibitem{CC.M3}
T.~Liu, S.~Zhou, and Z.~Niu, ``Mobility-aware coded-caching scheme for small
  cell network,'' in \emph{2017 IEEE International Conference on Communications
  (ICC)}, May 2017, pp. 1--6.

\bibitem{CC.M4}
E.~Ozfatura, T.~Rarris, D.~Gunduz, and O.~Ercetin, ``Delay-aware coded caching
  for mobile users,'' in \emph{2018 IEEE 29th Annual International Symposium on
  Personal, Indoor and Mobile Radio Communications (PIMRC)}, Sep. 2018, pp.
  1--5.

\bibitem{wsa}
E.~Ozfatura and D.~Gunduz, ``Mobility-aware coded storage and delivery,'' in
  \emph{WSA 2018; 22nd International ITG Workshop on Smart Antennas}, March
  2018, pp. 1--6.

\bibitem{G.M1}
M.~Chen, Y.~Hao, L.~Hu, K.~Huang, and V.~K.~N. Lau, ``Green and mobility-aware
  caching in 5{G} networks,'' \emph{IEEE Transactions on Wireless
  Communications}, vol.~16, no.~12, pp. 8347--8361, Dec 2017.

\bibitem{G.M2}
T.~Deng, G.~Ahani, P.~Fan, and D.~Yuan, ``Cost-optimal caching for d2d networks
  with user mobility: Modeling, analysis, and computational approaches,''
  \emph{IEEE Transactions on Wireless Communications}, vol.~17, no.~5, pp.
  3082--3094, May 2018.

\bibitem{G.M3}
J.~{Pedersen}, A.~{Graell i Amat}, I.~{Andriyanova}, and
  F.~{Br{\"{a}}nnstr{\"{o}}m}, ``Optimizing mds coded caching in wireless
  networks with device-to-device communication,'' \emph{IEEE Transactions on
  Wireless Communications}, vol.~18, no.~1, pp. 286--295, Jan 2019.

\bibitem{G.M4}
------, ``Distributed storage in mobile wireless networks with device-to-device
  communication,'' \emph{IEEE Transactions on Communications}, vol.~64, no.~11,
  pp. 4862--4878, Nov 2016.

\bibitem{G.M5}
R.~{Wang}, J.~{Zhang}, S.~H. {Song}, and K.~B. {Letaief}, ``Mobility-aware
  caching in d2d networks,'' \emph{IEEE Transactions on Wireless
  Communications}, vol.~16, no.~8, pp. 5001--5015, Aug 2017.

\bibitem{G.M6}
A.~{Piemontese} and A.~{Graell i Amat}, ``Mds-coded distributed caching for low
  delay wireless content delivery,'' \emph{IEEE Transactions on
  Communications}, vol.~67, no.~2, pp. 1600--1612, Feb 2019.

\bibitem{CD.F1}
M.~A. Maddah-Ali and U.~Niesen, ``Fundamental limits of caching,'' \emph{IEEE
  Trans. Inf. Theory}, vol.~60, no.~5, May 2014.

\bibitem{CD.F2}
------, ``Decentralized coded caching attains order-optimal memory-rate
  tradeoff,'' \emph{IEEE/ACM Trans. Netw.}, vol.~23, no.~4, Aug 2015.

\bibitem{CD.D2D}
M.~Ji, G.~Caire, and A.~F. Molisch, ``Fundamental limits of caching in wireless
  d2d networks,'' \emph{IEEE Transactions on Information Theory}, vol.~62,
  no.~2, pp. 849--869, Feb 2016.

\bibitem{CD.F3}
Q.~Yu, M.~A. Maddah-Ali, and A.~S. Avestimehr, ``The exact rate-memory tradeoff
  for caching with uncoded prefetching,'' \emph{IEEE Transactions on
  Information Theory}, vol.~64, no.~2, pp. 1281--1296, Feb 2018.

\bibitem{CD.F4}
M.~M. Amiri and D.~Gunduz, ``Fundamental limits of coded caching: Improved
  delivery rate-cache capacity tradeoff,'' \emph{IEEE Transactions on
  Communications}, vol.~65, no.~2, pp. 806--815, Feb 2017.

\bibitem{CD.ND1}
U.~Niesen and M.~A. Maddah-Ali, ``Coded caching with nonuniform demands,''
  \emph{IEEE Trans. Inf. Theory}, vol.~63, no.~2, Feb 2017.

\bibitem{CD.ND2}
M.~Ji, A.~M. Tulino, J.~Llorca, and G.~Caire, ``Order-optimal rate of caching
  and coded multicasting with random demands,'' \emph{IEEE Trans. Inf. Theory},
  vol.~63, no.~6, June 2017.

\bibitem{CD.ND3}
J.~Zhang, X.~Lin, and X.~Wang, ``Coded caching under arbitrary popularity
  distributions,'' \emph{IEEE Transactions on Information Theory}, vol.~64,
  no.~1, pp. 349--366, Jan 2018.

\bibitem{CD.ND4}
J.Hachem, N.~Karamchandani, and S.~N. Diggavi, ``Coded caching for multi-level
  popularity and access,'' \emph{IEEE Trans. Inf. Theory}, vol.~63, no.~5, May
  2017.

\bibitem{CD.ND5}
A.~M. Daniel and W.~Yu, ``Optimization of heterogeneous coded caching,''
  \emph{CoRR}, vol. abs/1708.04322, 2017.

\bibitem{CD.ND6}
S.~{Jin}, Y.~{Cui}, H.~{Liu}, and G.~{Caire}, ``Structural properties of
  uncoded placement optimization for coded delivery,'' \emph{CoRR}, vol.
  abs/1707.07146, 2017.

\bibitem{CD.ND7}
A.~Ramakrishnan, C.~Westphal, and A.~Markopoulou, ``An efficient delivery
  scheme for coded caching,'' in \emph{Proceedings of the 2015 27th
  International Teletraffic Congress}, ser. ITC '15.\hskip 1em plus 0.5em minus
  0.4em\relax Washington, DC, USA: IEEE Computer Society, 2015, pp. 46--54.

\bibitem{CD.ND8}
\BIBentryALTinterwordspacing
S.~Sahraei, P.~Quinton, and M.~Gastpar, ``The optimal memory-rate trade-off for
  the non-uniform centralized caching problem with two files under uncoded
  placement,'' \emph{CoRR}, vol. abs/1808.07964, 2018. [Online]. Available:
  \url{http://arxiv.org/abs/1808.07964}
\BIBentrySTDinterwordspacing

\bibitem{CD.ND9}
E.~Ozfatura and D.~Gunduz, ``Uncoded caching and cross-level coded delivery for
  non-uniform file popularity,'' in \emph{2018 IEEE International Conference on
  Communications (ICC)}, May 2018, pp. 1--6.

\bibitem{CD.Dopt4}
K.~Shanmugam, M.~Ji, A.~M. Tulino, J.~Llorca, and A.~G. Dimakis.,
  ``Finite-length analysis of caching-aided coded multicasting,'' \emph{IEEE
  Transactions on Information Theory}, vol.~62, no.~10, pp. 5524--5537, Oct
  2016.

\bibitem{CD.F5}
M.~M. Amiri, Q.~Yang, and D.~Gunduz, ``Coded caching for a large number of
  users,'' in \emph{2016 IEEE Information Theory Workshop (ITW)}, Sep. 2016,
  pp. 171--175.

\bibitem{challenge}
G.~Paschos, E.~Bastug, I.~Land, G.~Caire, and M.~Debbah, ``Wireless caching:
  Technical misconceptions and business barriers,'' \emph{IEEE Communications
  Magazine}, vol.~54, no.~8, pp. 16--22, August 2016.

\bibitem{lowsubpacket3}
K.~Shanmugam, A.~M. Tulino, and A.~G. Dimakis, ``Coded caching with linear
  subpacketization is possible using ruzsa-szemer{\'e}di graphs,'' in
  \emph{2017 IEEE International Symposium on Information Theory (ISIT)}, June
  2017, pp. 1237--1241.

\bibitem{lowsubpacket4}
K.~Shanmugam, A.~G. Dimakis, J.~Llorca, and A.~M. Tulino, ``A unified
  ruzsa-szemer{\'e}di framework for finite-length coded caching,'' in
  \emph{2017 51st Asilomar Conference on Signals, Systems, and Computers}, Oct
  2017, pp. 631--635.

\bibitem{lowsubpacket2}
L.~Tang and A.~Ramamoorthy, ``Coded caching schemes with reduced
  subpacketization from linear block codes,'' \emph{IEEE Transactions on
  Information Theory}, vol.~64, no.~4, pp. 3099--3120, April 2018.

\bibitem{lowsubpacket5}
\BIBentryALTinterwordspacing
M.~Cheng, J.~Jiang, and Y.~Yao, ``A novel recursive construction for coded
  caching schemes,'' \emph{CoRR}, vol. abs/1712.09090, 2017. [Online].
  Available: \url{http://arxiv.org/abs/1712.09090}
\BIBentrySTDinterwordspacing

\bibitem{reusepattern}
V.~H.~M. Donald, ``Advanced mobile phone service: The cellular concept,''
  \emph{The Bell System Technical Journal}, vol.~58, no.~1, pp. 15--41, Jan
  1979.

\bibitem{lowsubpacket6}
E.~Lampiris and P.~Elia, ``Adding transmitters dramatically boosts
  coded-caching gains for finite file sizes,'' \emph{IEEE Journal on Selected
  Areas in Communications}, vol.~36, no.~6, pp. 1176--1188, June 2018.

\bibitem{nitish}
N.~Mital, D.~Gunduz, and C.~Ling, ``Coded caching in a multi-server system with
  random topology,'' in \emph{2018 IEEE Wireless Communications and Networking
  Conference (WCNC)}, April 2018, pp. 1--6.

\bibitem{CD.CP1}
Y.~P. Wei and S.~Ulukus, ``Novel decentralized coded caching through coded
  prefetching,'' in \emph{2017 IEEE Information Theory Workshop (ITW)}, Nov
  2017, pp. 1--5.

\bibitem{CD.CP2}
A.~A. Zewail and A.~Yener, ``Coded caching for combination networks with
  cache-aided relays,'' in \emph{2017 IEEE International Symposium on
  Information Theory (ISIT)}, June 2017, pp. 2433--2437.

\bibitem{complexity}
\BIBentryALTinterwordspacing
D.~Zuckerman, ``Linear degree extractors and the inapproximability of max
  clique and chromatic number,'' in \emph{Proceedings of the Thirty-eighth
  Annual ACM Symposium on Theory of Computing}, ser. STOC '06.\hskip 1em plus
  0.5em minus 0.4em\relax New York, NY, USA: ACM, 2006, pp. 681--690. [Online].
  Available: \url{http://doi.acm.org/10.1145/1132516.1132612}
\BIBentrySTDinterwordspacing

\bibitem{vs1}
X.~Cheng, C.~Dale, and J.~Liu, ``Statistics and social network of youtube
  videos,'' in \emph{2008 16th Int. Workshop on Quality of Service}, June 2008.

\bibitem{VS2}
M.~Cha, H.~Kwak, P.~Rodriguez, Y.~Y. Ahn, and S.~Moon, ``Analyzing the video
  popularity characteristics of large-scale user generated content systems,''
  \emph{IEEE/ACM Trans. Netw.}, vol.~17, no.~5, Oct 2009.

\bibitem{VS3}
M.~Zink, K.~Suh, Y.~Gu, and J.~Kurose, ``Characteristics of youtube network
  traffic at a campus network - measurements, models, and implications,''
  \emph{Comput. Netw.}, vol.~53, no.~4, pp. 501--514, Mar. 2009.

\bibitem{VS4}
J.~Lin, Z.~Li, G.~Xie, Y.~Sun, K.~Salamatian, and W.~Wang, ``Mobile video
  popularity distributions and the potential of peer-assisted video delivery,''
  \emph{IEEE Communications Magazine}, vol.~51, no.~11, pp. 120--126, November
  2013.

\bibitem{mobility1}
J.~{Qiao}, Y.~{He}, and X.~S. {Shen}, ``Proactive caching for mobile video
  streaming in millimeter wave 5g networks,'' \emph{IEEE Transactions on
  Wireless Communications}, vol.~15, no.~10, pp. 7187--7198, Oct 2016.

\bibitem{mobility2}
Z.~{Zhao}, L.~{Guardalben}, M.~{Karimzadeh}, J.~{Silva}, T.~{Braun}, and
  S.~{Sargento}, ``Mobility prediction-assisted over-the-top edge prefetching
  for hierarchical vanets,'' \emph{IEEE Journal on Selected Areas in
  Communications}, vol.~36, no.~8, pp. 1786--1801, Aug 2018.

\bibitem{mobility3}
Y.~{Ye}, M.~{Xiao}, Z.~{Zhang}, and Z.~{Ma}, ``Performance analysis of mobility
  prediction based proactive wireless caching,'' in \emph{2018 IEEE Wireless
  Communications and Networking Conference (WCNC)}, April 2018, pp. 1--6.

\bibitem{mobility4}
N.~Abani, T.~Braun, and M.~Gerla, ``Proactive caching with mobility prediction
  under uncertainty in information-centric networks,'' in \emph{Proceedings of
  the 4th ACM Conference on Information-Centric Networking}, ser. ICN
  '17.\hskip 1em plus 0.5em minus 0.4em\relax New York, NY, USA: ACM, 2017, pp.
  88--97.

\end{thebibliography}

\end{document}